\documentclass[sigconf,nonacm]{acmart}

\usepackage{amsmath}
\usepackage{microtype}
\usepackage{booktabs}
\usepackage{algorithm}
\usepackage{algpseudocode}
\usepackage{enumitem}
\usepackage{tikz}
\usepackage{fontawesome5}
\usetikzlibrary{arrows.meta,positioning,fit,calc,backgrounds,decorations.pathreplacing,shapes.geometric}
\usepackage{pgfplots}
\pgfplotsset{compat=1.18}
\usepackage{xspace}
\usepackage{framed}
\usepackage[most]{tcolorbox}
\usepackage{subcaption}
\usepackage{needspace}
\usepackage{dblfloatfix}
\usepackage{placeins}
\usepackage{cuted}

\definecolor{cBlue}{RGB}{65,105,225}
\definecolor{cOrange}{RGB}{230,145,56}
\definecolor{cTeal}{RGB}{0,128,128}
\definecolor{cRed}{RGB}{190,50,50}
\definecolor{cGreen}{RGB}{3,141,89}
\definecolor{cPurple}{RGB}{130,80,180}
\definecolor{cGray}{RGB}{120,120,120}

\newtcolorbox{propbox}{
  enhanced,
  colback=cBlue!4,
  colframe=cBlue!72!black,
  arc=2.5pt,
  boxrule=0.5pt,
  left=10pt, right=10pt, top=6pt, bottom=6pt,
  before skip=6pt, after skip=6pt,
  borderline west={1.8pt}{0pt}{cBlue!72!black},
}

\newtcolorbox{rqbox}{
  enhanced jigsaw, breakable=false,
  colback=black!7, colframe=black!7,
  boxrule=0pt, arc=1pt,
  left=6pt, right=6pt, top=6pt, bottom=6pt,
  before skip=\topsep, after skip=\topsep,
}

\pgfplotsset{
  every axis/.append style={
    grid style={dashed, gray!20},
    tick label style={font=\scriptsize},
    label style={font=\small\sffamily},
    legend style={
      font=\scriptsize,
      fill=white!90!gray,
      draw=none,
      rounded corners=2pt,
      legend cell align=left,
    },
    axis line style={black, line width=0.3pt},
  },
}

\setlist[itemize]{leftmargin=*,topsep=2pt,itemsep=1.5pt,parsep=0pt}
\setlist[enumerate]{leftmargin=*,topsep=2pt,itemsep=1.5pt,parsep=0pt}

\title{Synthesizing Multi-Agent Harnesses for Vulnerability Discovery}

\author{Hanzhi Liu}
\affiliation{%
  \institution{University of California, Santa Barbara}
  \country{}
}
\email{hanzhi@ucsb.edu}

\author{Chaofan Shou}
\affiliation{%
  \institution{Fuzzland}
  \country{}
}
\email{shou@fuzz.land}

\author{Xiaonan Liu}
\affiliation{%
  \institution{Fuzzland}
  \country{}
}
\email{xl@fuzz.land}

\author{Hongbo Wen}
\affiliation{%
  \institution{University of California, Santa Barbara}
  \country{}
}
\email{hongbowen@ucsb.edu}

\author{Yanju Chen}
\affiliation{%
  \institution{University of California, San Diego}
  \country{}
}
\email{yanju@ucsd.edu}

\author{Ryan Jingyang Fang}
\affiliation{%
  \institution{World Liberty Financial}
  \country{}
}
\email{ryan@worldlibertyfinancial.com}

\author{Yu Feng}
\affiliation{%
  \institution{University of California, Santa Barbara}
  \country{}
}
\email{yufeng@cs.ucsb.edu}

\begin{document}

\begin{abstract}
LLM agents have begun to find real security vulnerabilities
that human auditors and automated fuzzers missed for decades,
in source-available targets where the analyst can build and
instrument the code.
In practice the work is split among several agents, wired
together by a \emph{harness}: the program that fixes which
roles exist, how they pass information, which tools each may
call, and how retries are coordinated.
When the language model is held fixed, changing only the
harness can still change success rates by several-fold on
public agent benchmarks, yet most harnesses are written by
hand; recent harness optimizers each search only a narrow
slice of the design space and rely on coarse pass/fail
feedback that gives no diagnostic signal about \emph{why} a
trial failed.
\textsc{AgentFlow} addresses both limitations with a typed
graph DSL whose search space jointly covers agent roles,
prompts, tools, communication topology, and coordination
protocol, paired with a feedback-driven outer loop that reads
runtime signals from the target program itself to diagnose
which part of the harness caused the failure and rewrite it
accordingly.
We evaluate \textsc{AgentFlow} on TerminalBench-2 with
Claude~Opus~4.6 and on Google~Chrome with Kimi~K2.5.
\textsc{AgentFlow} reaches \textbf{84.3\%} on TerminalBench-2,
the highest score in the public leaderboard snapshot we
evaluate against, and discovers \textbf{ten previously unknown zero-day
vulnerabilities} in Google~Chrome, including \textbf{two Critical
sandbox-escape vulnerabilities} (CVE-2026-5280 and
CVE-2026-6297).
\end{abstract}

\maketitle

\section{Introduction}
\label{sec:intro}

Language-model agents have begun to find real security
vulnerabilities that human auditors and automated fuzzers missed for
decades.
Google's Big Sleep agent discovered a zero-day memory-corruption
flaw in SQLite (CVE-2025-6965) before attackers could exploit
it~\cite{google2024bigsleep}.
Anthropic's Claude Mythos Preview found a 27-year-old denial-of-service
vulnerability in OpenBSD's TCP SACK implementation, crashing any
OpenBSD host with two
packets~\cite{anthropic2026mythos}.
LLM agents have autonomously hacked real-world
websites~\cite{fang2024llmagent}; teams of LLM agents have further
improved performance on real-world zero-day web-vulnerability
exploitation~\cite{fang2024teams}; and agentic systems have
reproduced CTF-level exploits from natural-language
descriptions~\cite{shao2024nyu}.
The capability is real and accelerating.

To see where these capabilities come from, and why they are still
limited, it helps to look at how an LLM-based vulnerability
finder is actually put together.
In the simplest deployment, a single language model operates as a
\emph{security agent}: the operator hands it a \emph{target
program} (the software under analysis, e.g.\ \texttt{libtiff},
OpenSSL, \texttt{curl}), gives it a \emph{system prompt} that
states its goal in natural language (``find a memory-safety
vulnerability in the TIFF parsing module''), and exposes a set
of \emph{tools}: concrete actions the model may invoke at each
step (read source files, compile the target with
instrumentation, execute the binary, inspect output).
The agent reasons about the target, generates candidate inputs,
runs them, and observes the results.
This loop of reasoning, tool invocation, and feedback is the
single-agent system.

A single agent has to do everything inside one reasoning trace:
read the target's validation logic, craft structured inputs that
satisfy format checks, run the target, and decide whether what
came back is a real crash.
This breaks down for three reasons.
First, real security targets produce voluminous output: a single
instrumented Chrome build emits megabytes of sanitizer and
coverage data per run, quickly exhausting even frontier-scale
context windows (on the order of $200$K tokens).
Second, when heterogeneous sub-tasks (source analysis,
input crafting, crash triage) compete for the same context,
the model exhibits \emph{lost-in-the-middle}
effects~\cite{liu2024lost}: it drops earlier analysis, repeats
work it already did, and abandons long-horizon strategies.
Third, a single trace cannot explore multiple hypotheses in
parallel; it must commit to one strategy at a time, and a
dead-end wastes the entire budget.
These limitations are well documented in the agent-systems
literature~\cite{hong2024metagpt,wu2023autogen,hu2024adas}
and are the reason that current state-of-the-art systems do not
rely on a single agent.

Production systems instead split the work across specialized
agents, each with its own prompt, LLM model (the
underlying language model assigned to that role), and tools,
much like a small security team: an \emph{analyst} extracts the
preconditions a valid input must satisfy from the source, an
\emph{explorer} crafts inputs guided by the analyst's summary,
and a \emph{verifier} runs the candidate and decides whether
the resulting behavior counts as a real crash, with the
verifier's verdict feeding back to the analyst on failure.
A team of three is strictly more powerful than a soloist, but
it adds a new responsibility: someone has to decide
\emph{which} roles exist, \emph{who} talks to \emph{whom},
\emph{what} information passes along each edge, and
\emph{when} the team retries.

The agent-systems community refers to the orchestration code that
makes these decisions as the
\emph{harness}~\cite{google2024bigsleep,anthropic2026mythos,lee2026metaharness}:
the program that specifies which agents exist, what prompt each
one receives, what tools each may invoke, how outputs are routed
through the communication graph, and what coordination protocol
governs execution (sequential, parallel, fan-out, retry on
failure).
Designing a good harness is the central concern of this paper.
On the public TerminalBench-2 leaderboard~\cite{terminalbench2},
three systems run the same Claude~Opus~4.6 model on the
same $89$ long-horizon tasks but their pass rates span a
$4\times$ range, from $20\%$ to $80\%$~\cite{lee2026metaharness};
with the model held constant, the design choices that remain
(orchestration, prompts, tools, coordination, and feedback
channels) account for the spread.
A well-designed harness solves more tasks with fewer
inference calls; a poorly-designed one burns compute on
dead-end strategies and runs into the tens of thousands of
dollars per
campaign~\cite{google2024bigsleep,anthropic2026mythos}.

Because harness engineering matters this much, recent work has
begun to automate it: instead of hand-tuning the harness, an
\emph{outer-loop optimizer} runs the current harness, observes
how it did, and proposes a new harness for the next trial.
Meta-Harness~\cite{lee2026metaharness} rewrites that agent's
prompts, tool bindings, and context-building code but keeps the
team to a single agent.
ADAS~\cite{hu2024adas} introduces new agent code (new roles with
new prompts and tool calls) but keeps the communication graph
between agents fixed by a hand-written controller.
AFlow~\cite{zhang2024aflow} uses a tree search over a fixed library
of predefined workflow operators, restructuring how agents are wired
together but holding the agent pool, the prompts, and the tools
constant.
MaAS~\cite{yuan2025maas} samples agent teams from a pre-defined pool
but fixes how those agents communicate and hand off control.
Each system improves over hand-designed harnesses on
its own benchmark, yet they share two fundamental limitations.

\emph{Narrow scope.}
Each optimizer searches only a small slice of the harness design
space: Meta-Harness edits prompts and tool bindings within a
single agent; ADAS generates new Python controller code but fixes
the communication graph; AFlow picks from a library of predefined
operators; MaAS samples agents but fixes their communication
protocol.
The reason is tractability: opening the search space to all
harness components simultaneously makes na\"ive search
intractable.
In practice this means they cannot discover solutions that require
cross-component changes (e.g., adding a new agent \emph{and}
rewiring the communication graph \emph{and} adjusting its prompt
in a single step), which our evaluation shows are precisely the
edits that separate competitive harnesses from mediocre ones.

\emph{Coarse feedback.}
Existing optimizers typically rely on raw agent traces or binary
pass/fail outcomes as the signal for the next proposal.
This ``zero or one'' feedback provides no fine-grained diagnostic
information about \emph{why} a trial failed: whether the
agent's input never reached the vulnerable function, whether
the crash was masked by an unrelated assertion, or whether the
target's defense was never exercised.
In a large search space, coarse feedback turns the optimization
into a random walk.

These two limitations motivate \textsc{AgentFlow}.

\emph{Addressing narrow scope with a typed graph DSL.}
To search over all harness components at once,
\textsc{AgentFlow} represents every harness as a program in a
typed graph DSL: nodes are agents, edges are dataflow or retry
links, and all harness dimensions (agent roles
$\mathcal{A}$, communication topology $\mathcal{G}$,
message schemas $\Sigma$, tool bindings $\Phi$, and
coordination protocol $\Psi$) each map to a first-class
editable field in the program.
A single optimization step can therefore add a new agent,
rewire the communication graph, adjust the new agent's prompt,
and restrict its tool set, all as one local rewrite.
The obvious risk is that opening the full
space makes search intractable.
\textsc{AgentFlow} controls this explosion through a type
system over the graph: before any candidate harness is
dispatched for the expensive LLM evaluation, a
well-formedness check verifies that every template variable
resolves to a declared agent output or feedback channel, that
every edge connects declared agents, and that the graph is
connected.
The type system is what makes searching the full search space
tractable: it eliminates structurally broken candidates
cheaply, so the \textsc{AgentFlow} spends its budget only on
well-formed harnesses.

\emph{Addressing coarse feedback with runtime diagnostics.}
Instead of a binary pass/fail signal,
\textsc{AgentFlow} reads structured runtime feedback from the
target program itself (test verdicts, line-level coverage
maps, sanitizer reports), together with each agent's action
traces and the archive of all prior trials.
This information lets the system localize \emph{why} a
harness failed rather than only \emph{that} it failed.
For example, a coverage map indicates whether an agent's
input reached the vulnerable function; a sanitizer trace
distinguishes a benign crash from the memory-safety
vulnerability the campaign is targeting; and the archive of
prior trials identifies when a proposed harness repeats one
that was previously evaluated.
Each of these signals provides a concrete lever that a purely
binary verdict does not.

We evaluate \textsc{AgentFlow} on two workloads that differ
in domain, scale, and LLM model.
On \textbf{TerminalBench-2}~\cite{terminalbench2}, a public
leaderboard of $89$ long-horizon terminal tasks, the
synthesized harness reaches \textbf{84.3\%}
with Claude~Opus~4.6~\cite{anthropic2025opus}, the highest
score among all publicly-ranked harnesses.
On the \textbf{Google~Chrome} codebase (over 35~million lines
of C/C++), the same synthesis loop, driven by the open-weight
LLM model Kimi~K2.5~\cite{team2025kimik25}, discovers \textbf{ten}
previously unknown zero-day vulnerabilities, including two
Critical sandbox-escape CVEs (CVE-2026-5280 and
CVE-2026-6297), all confirmed by the vendor through the
Chrome Vulnerability Reward Program.

\paragraph{Contributions}
\begin{itemize}[leftmargin=*,nosep]
  \item A typed graph DSL that unifies all harness
    dimensions into a single searchable representation,
    with a type system that keeps the enlarged space
    tractable (Section~\ref{sec:preliminaries}).
  \item A feedback-driven optimization loop that reads
    runtime signals from the target (coverage, sanitizer
    output, action traces) to diagnose failures and direct
    the search (Section~\ref{sec:optimization}).
  \item State-of-the-art results on
    TerminalBench-2 and Google~Chrome, including ten
    previously unknown zero-day vulnerabilities, including
    two Critical sandbox escapes (Section~\ref{sec:evaluation}).
\end{itemize}

\section{Motivating Example}
\label{sec:example}

\input{fig-case-study}

Before describing \textsc{AgentFlow} in detail, we walk
through a concrete run of the system on a single target to
show how it iteratively improves the harness using feedback
from the target program.
The target is \texttt{libheif}, a widely used reference implementation of the
HEIF/HEVC image format.
The vulnerability is a heap-buffer-overflow in the color conversion logic
(\texttt{CVE-2020-23109}): the function reads an 8-bit alpha plane but hardcodes a 16-bit copy length (\texttt{width * 2}), causing an out-of-bounds read.
Triggering the overflow requires two conditions: the file must pass
an initial format constraint check, and the image must contain an 8-bit alpha channel to trigger the undersized allocation mismatch.

\paragraph{Iteration~1: the file gets thrown out before the bug
ever runs.}
\textsc{AgentFlow} starts with one generic agent, prompted with
``find a memory-safety bug in the HEIF parsing module.''
The agent generates an 815-byte HEIF file, hands it to the
\texttt{libheif} binary, and the binary exits cleanly with no
crash.
The binary writes a line to its standard error stream
saying that the input failed an early format check, and \textsc{AgentFlow} reads that line through the \texttt{stderr}
channel.
The agent's own trace claims it tested the parser; the
program's stderr shows that the parser rejected the file long
before the vulnerable code ran.
The topology at this stage is a single node
(Figure~\ref{fig:case-study}, Iter\,1).

\paragraph{Iteration~2: the right function runs, but the bug is
in a deeper branch.}
\textsc{AgentFlow} reads the stderr message (``format constraint
not satisfied'') and identifies that the single agent failed
at file construction, not at bug triggering: it does not know
enough about HEIF's on-disk format to write a valid file.
The system splits the work into two agents: an \emph{analyzer} whose
only job is to read the HEIF parser's source code and write a
short note describing what a real HEIF file looks like, and a
\emph{crafter} that turns that note into bytes
(Figure~\ref{fig:case-study}, Iter~2).
On the next trial the two-agent harness produces an 817-byte
file that passes the format check, the parser runs on it, and
again the binary exits without crashing.
This time a different signal is available: line-coverage data
(the ``coverage'' feedback in Figure~\ref{fig:case-study},
Iter\,2), which records exactly
which source lines the parser executed during the run.
The coverage record shows that the parser entered the colour
conversion function but never took the specific branch that
handles 8-bit alpha.
The right function ran; the buggy branch inside it was never
visited.

\paragraph{Iteration~3: edit a real HEIF file instead of
inventing one, retry until something complains, and let a
memory checker do the complaining.}
\textsc{AgentFlow} adds a third agent, a \emph{verifier},
downstream of the crafter, and rewrites two things about the
verifier's prompt.
\textbf{(1) Start from a real file, not from scratch.}
Writing a valid HEIF file byte-by-byte is hard, and the
crafter has been spending most of its budget just keeping the
container syntactically legal.
Instead the verifier is told to take a real HEIF image already
on disk (a ``seed'') and change only the one field that
controls the alpha channel's bit depth, leaving every other
byte alone.
\textbf{(2) Try, look, tweak, try again, until something goes
wrong} (\texttt{+ retry loop}).
Instead of running the binary once and giving up, the verifier
is instructed to flip the alpha-bit-depth field, run the
binary, look at the result, adjust the field again, and repeat
until the binary reports an error.
The system also wires a new feedback channel into this
verifier: AddressSanitizer (the ``sanitizer'' feedback channel
in Figure~\ref{fig:case-study}, Iter\,3), a
memory-error detector that has been compiled into the
\texttt{libheif} binary.
AddressSanitizer watches every memory read and write at run
time and reports an error whenever the program touches a byte
outside the buffer the program itself allocated, even when
the program would otherwise have continued without a visible
crash.
Within a handful of tweaks the verifier finds an alpha-bit-depth
value for which AddressSanitizer reports an out-of-bounds read
at the colour-conversion function's \texttt{memcpy} call.
The harness has produced a small, reproducible HEIF file that
triggers the heap-buffer-overflow found in CVE-2020-23109.

\paragraph{What changed}
Across three iterations, \textsc{AgentFlow} added two agents,
rewired the communication graph, rewrote prompts to start
from a real file and to keep retrying until the binary
complained, and routed each new signal to whichever agent
needed it.

\section{Background}
\label{sec:background}

\subsection{Agents and Harnesses}
\label{sec:bg-pipelines}

An \emph{agent} is a large language model (LLM) equipped
with a natural-language instruction (the \emph{system
prompt}) and a set of \emph{tools} it may call (read files,
compile binaries, run tests).
At each step the model reads its accumulated context, picks
an action, observes the result, and repeats until the task
is done or its context window fills up.
A \emph{harness} orchestrates one or more agents into a
pipeline: it fixes which agents exist, what prompt and tools
each receives, how they pass information to one another, and
when the pipeline retries on failure.
Section~\ref{sec:formalization} formalizes the harness as a
five-component tuple.

\subsection{Runtime Feedback Channels}
\label{sec:bg-feedback}

When an agent runs a target program, the program can produce
several kinds of observable output beyond a simple
pass/fail verdict.
We distinguish four classes of runtime feedback that are
relevant to vulnerability discovery:

\begin{enumerate}[leftmargin=*,nosep]
  \item \textbf{Test verdict}: whether the target's own test
    suite passed or failed on the agent's input.
  \item \textbf{Program stdout/stderr}: the raw text the
    target printed during execution, including error messages,
    warnings, and diagnostic output.
  \item \textbf{Line and branch coverage}: which source-code
    lines and conditional branches the target actually
    executed, obtained from LLVM source-based
    instrumentation~\cite{fioraldi2020afl}.
    Coverage reveals whether the agent's input reached the
    code region where a vulnerability might reside.
  \item \textbf{Sanitizer reports}: runtime error reports from
    AddressSanitizer and
    UndefinedBehaviorSanitizer~\cite{serebryany2012asan},
    which detect memory-safety violations (buffer overflows,
    use-after-free) and undefined-behavior errors even when
    the program does not visibly crash.
\end{enumerate}

\noindent
The first two channels are available in any environment; the
last two require the target to be compiled with
instrumentation.

\subsection{Vulnerability Discovery as an Agent Task}
\label{sec:bg-vulndiscovery}

Vulnerability discovery is a sparse-reward sequential
decision problem: most trials end in failure, and failures
carry little diagnostic information.
A harness that never reached the vulnerable function
produces the same pass/fail verdict as one that reached it
and skipped the error-handling branch where the bug resides.
Runtime feedback channels
(Section~\ref{sec:bg-feedback}) can break this ambiguity:
coverage data reveals whether the vulnerable code was
reached; sanitizer output reveals whether a memory error
occurred.

\subsection{Threat Model}
\label{sec:threat-model}

The system is operated by a security analyst on a target
whose source code and build infrastructure are available.
The analyst configures the target's build to emit the
runtime feedback channels of
Section~\ref{sec:bg-feedback}: at minimum the test verdict
and stdout/stderr, optionally augmented with coverage and
sanitizer instrumentation.
This is the standard precondition for source-level security
analysis (source-level audit, coverage-guided fuzzing,
sanitizer-instrumented CI) and is satisfied by every target
in Section~\ref{sec:evaluation}.
Target output is delivered through structured fields (typed
verdict, typed sanitizer report) and freeform stdout/stderr
enclosed in fixed delimiters, so that adversarially-formed
target output cannot be parsed as instructions to the
system.

\section{Problem Formalization}
\label{sec:preliminaries}

\begin{figure*}[t]
\centering
\footnotesize
\begin{minipage}[t]{0.48\textwidth}
\centering
{\normalsize\bfseries (a)\ \,Abstract syntax}\\[4pt]
{\renewcommand{\arraystretch}{1.05}%
\begin{tabular}{@{}r@{~}c@{~}l@{\quad}l@{}}
\multicolumn{4}{@{}l}{\textit{Programs}} \\
\addlinespace[2pt]
$P$ & $::=$ & $(\mathcal{N}, \mathcal{E})$
                                                  & nodes $\mathcal{N}$, edges $\mathcal{E}$ \\
\addlinespace[5pt]
\multicolumn{4}{@{}l}{\textit{Nodes}} \\
\addlinespace[1pt]
$n$ & $::=$  & $\mathsf{agent}(\rho, \pi, m, \phi)$
                                                  & role label $\rho$, prompt $\pi$,        \\
    &        &                                    & model id $m$, tools $\phi \subseteq \mathcal{T}$ \\
    & $\mid$ & $\mathsf{fanout}(n, k)$            & $k$ parallel copies of $n$
                                                    (\,$k \in \mathbb{N}_+$\,)              \\
\addlinespace[5pt]
\multicolumn{4}{@{}l}{\textit{Edges}} \\
\addlinespace[1pt]
$e$ & $::=$  & $n_1 \to n_2$                      & data edge (carries $n_1.\mathsf{out}$ to $n_2$)        \\
    & $\mid$ & $n_1 \to_{g} n_2$                  & guarded edge (fires only when $n_1$'s outcome is $g$); \\
    &        &                                    & $g \in \{\mathsf{ok}, \mathsf{fail}\}$.\quad
                                                    Surface syntax \texttt{n.on\_g >> m}                   \\
\addlinespace[5pt]
\multicolumn{4}{@{}l}{\textit{feedback channels} ($\mathcal{O}$, emitted by the target)} \\
\addlinespace[1pt]
$\mathcal{O}$ & $\ni$  & $\mathsf{cov}(m)$         & line coverage \\
              & $\mid$ & $\mathsf{branch}(m)$      & branch coverage \\
              & $\mid$ & $\mathsf{san}(\tau)$      & sanitizer report \\
              & $\mid$ & $\mathsf{trace}(n)$       & agent trace \\
              & $\mid$ & $\mathsf{outcome}(P)$     & test outcome \\
\end{tabular}}
\end{minipage}%
\hfill
\begin{minipage}[t]{0.48\textwidth}
\centering
{\normalsize\bfseries (b)\ \,Well-formedness}\\[4pt]
{\itshape Inference rules}\\[2pt]
\resizebox{\linewidth}{!}{$\displaystyle
\begin{array}{c}
\displaystyle
\frac{\mathit{fv}(\pi) \subseteq \mathit{In}(n) \cup \mathcal{O}}
     {\vdash \mathsf{agent}(\rho, \pi, m, \phi) : \mathit{Node}}
     \;\textsc{(T-Agent)} \\[7pt]
\displaystyle
\frac{\vdash n : \mathit{Node} \quad k \in \mathbb{N}_+}
     {\vdash \mathsf{fanout}(n, k) : \mathit{Node}}
     \;\textsc{(T-Fanout)} \\[7pt]
\displaystyle
\frac{\vdash n_1 : \mathit{Node} \quad \vdash n_2 : \mathit{Node} \quad
      n_1.\mathsf{out} \in \mathit{fv}(\pi_{n_2})}
     {\vdash n_1 \to n_2 : \mathit{Edge}}
     \;\textsc{(T-Edge)} \\[7pt]
\displaystyle
\frac{\vdash n_1 : \mathit{Node} \quad \vdash n_2 : \mathit{Node} \quad
      g \in \{\mathsf{ok}, \mathsf{fail}\}}
     {\vdash n_1 \to_{g} n_2 : \mathit{Edge}}
     \;\textsc{(T-Branch)} \\[7pt]
\displaystyle
\frac{\forall n \in \mathcal{N}.\;\exists\, n_0 \in \mathit{Src}(P).\;
      n_0 \rightsquigarrow_{\mathcal{E}} n}
     {\vdash P : \mathit{Conn}}
     \;\textsc{(T-Conn)} \\[7pt]
\displaystyle
\frac{(\forall n.\;\vdash n : \mathit{Node}) \quad
      (\forall e.\;\vdash e : \mathit{Edge}) \quad
      \vdash P : \mathit{Conn}}
     {\vdash P : \mathit{Harness}}
     \;\textsc{(T-Pipe)}
\end{array}
$}
\end{minipage}

\par\vspace{4pt}
\begin{center}
{\itshape\small Notation}\\[2pt]
\scriptsize
\renewcommand{\arraystretch}{1.10}%
\begin{tabular}{@{}r@{\;\;}l@{\qquad\qquad}r@{\;\;}l@{}}
$\mathit{In}(n)$
  & $= \{n'.\mathsf{out} \mid (n' \to n) \in \mathcal{E}\}$
    \,\,(inputs visible to $n$)
& $n_0 \rightsquigarrow_{\mathcal{E}} n$
  & $:$ a directed path $n_0 \to^{*} n$ exists in $\mathcal{E}$ \\
$\mathit{Src}(P)$
  & $= \{n \in \mathcal{N} \mid \nexists\,(n' \to n) \in \mathcal{E}\}$
    \,\,(source nodes)
& $n.\mathsf{out}$
  & $:$ output channel of node $n$ \\
$\mathit{fv}(\pi)$
  & $:$ free template variables of $\pi$
& & \\
\end{tabular}
\end{center}
\caption{The \textsc{AgentFlow} language: abstract syntax (a) and
well-formedness rules (b).  The runtime that executes a well-formed
program is described in Algorithm~\ref{alg:harness-opt} and
Section~\ref{sec:system}.}
\label{fig:dsl-rules}
\end{figure*}

\subsection{Multi-Agent Harnesses}
\label{sec:formalization}

A harness $H$ is a multi-agent pipeline that wraps one or more
language models and mediates their interaction with an environment.
We decompose it into five components,
\[
H \;=\; (\mathcal{A},\, \mathcal{G},\, \Sigma,\, \Phi,\, \Psi),
\]
described below.

\begin{itemize}[leftmargin=*,nosep]
  \item $\mathcal{A}$, the \textbf{agent set}.
    Each agent is a triple $(\text{role}, \pi, m)$.
    The \emph{role} is a short label describing the agent's
    function (e.g.\ ``analyst,'' ``explorer,'' ``verifier'').
    The \emph{system prompt}~$\pi$ is a natural-language
    instruction that tells the model what to do, what format to
    produce, and what information to pay attention to.
    The \emph{LLM model}~$m$ is the large language model
    (LLM) that powers the agent, for example,
    Claude~Opus~4.6~\cite{anthropic2025opus} or
    Kimi~K2.5~\cite{team2025kimik25}.
    Different agents in the same harness may use different
    LLM models.
  \item $\mathcal{G} \subseteq \mathcal{A} \times \mathcal{A}$, the
    \textbf{communication topology}: a directed graph whose edges
    determine which agent's output is visible to which other
    agent, and in what order they run.
  \item $\Sigma$, the \textbf{message schema}:
    for each edge $(a, b) \in \mathcal{G}$, a template determining
    what $a$'s output passes to~$b$.
    Templates may reference feedback channels (test output,
    program stdout/stderr, coverage, sanitizer output) as free
    variables; which agents reference which channels is determined
    by the templates themselves.
  \item $\Phi : \mathcal{A} \to 2^{\mathcal{T}\mathit{ools}}$, the
    \textbf{tool allocation}: which tools each agent may invoke
    (e.g.\ read source files, compile and run the target,
    query a database).
  \item $\Psi$, the \textbf{coordination protocol}: how agents
    are composed: sequentially, in parallel, as a fan-out
    (cloning one agent into $k$ independent copies), or in a
    loop-until-success pattern.
\end{itemize}
A single-agent system is the trivial case $|\mathcal{A}|=1$,
$\mathcal{G}=\varnothing$.

\paragraph{Prior work as special cases}
Each recent agent-design system fixes some components of~$H$ and
searches the rest.
\begin{itemize}[leftmargin=*,nosep]
  \item \textbf{Meta-Harness}~\cite{lee2026metaharness}: partially
    mutates the single agent through prompt and template edits
    ($\Sigma$; fixed $|\mathcal{A}|=1$) with
    $\mathcal{G}=\varnothing$.
  \item \textbf{ADAS}~\cite{hu2024adas}: searches $\mathcal{A}$
    via Python code generation with $\mathcal{G}$ pinned to a
    hierarchical loop.
  \item \textbf{AFlow}~\cite{zhang2024aflow}: searches
    $(\mathcal{G}, \Psi)$ by Monte Carlo tree search over a homogeneous agent pool.
  \item \textbf{MaAS}~\cite{yuan2025maas}: samples $\mathcal{A}$
    from a fixed agent pool, with $\Psi$ hardwired to a
    routing cascade in which each query is handed off through
    a fixed sequence of agents.
\end{itemize}
Each system exposes a different, narrow set of edits,
targeted at the components it searches: Meta-Harness rewrites
prompt and template text; ADAS emits new Python controller
code; AFlow picks workflow operators from a fixed library;
MaAS selects agents from a fixed pool.
\textsc{AgentFlow} instead ranges over \emph{all} five
components ($\mathcal{A}, \mathcal{G}, \Sigma, \Phi, \Psi$)
inside a single typed grammar, so every proposed edit,
whether it adds an agent, rewires the graph, rebinds a
channel, revokes a tool, or changes the retry behaviour, is a
local rewrite of a program in the same language.

\subsection{A Typed DSL for Harnesses}
\label{sec:dsl}

We define a typed, graph-structured domain-specific
language for specifying multi-agent harnesses
(Figure~\ref{fig:dsl-rules}).
A program $P = (\mathcal{N}, \mathcal{E})$ is a labelled directed
graph: $\mathcal{N}$ is the node set (agents, optionally lifted
into parallel families by \textsf{fanout}), and $\mathcal{E}$ the
edge set (directed edges between nodes).
The five-component view from Section~\ref{sec:formalization} maps
directly into this DSL: the agent set $\mathcal{A}$ becomes the
agent nodes in $\mathcal{N}$, the topology $\mathcal{G}$ becomes
$\mathcal{E}$, the per-edge schemas $\Sigma$ are encoded by which
upstream outputs each agent's template references, the tool
allocation $\Phi$ is encoded per agent (each \textsf{agent} node
carries its own tool set $\phi$), and the coordination protocol
$\Psi$ is encoded by graph topology (sequential chains,
fan-out into parallel families, an aggregating downstream agent
that consumes their outputs).
The DSL serves as the concrete search space for the
optimizer: every candidate harness is a well-formed
\textsc{AgentFlow} program.
At a high level, readers only need to keep three ideas in
mind: nodes are agents, edges are dataflow or retry links,
and templates determine which upstream outputs and runtime
feedback streams an agent can see.
Figure~\ref{fig:dsl-example} shows one program and its compiled
topology.

\begin{figure*}[t]
\centering
\footnotesize
\begin{minipage}[t]{0.44\textwidth}
\centering
\textbf{\textsc{AgentFlow} program}\\[2pt]
\begin{tikzpicture}[
  codeblock/.style={
    anchor=north west,
    text width=0.90\textwidth,
    inner xsep=5pt, inner ysep=1pt,
    font=\ttfamily\fontsize{6.5}{7.6}\selectfont,
    align=left,
  },
  lbar/.style={
    line width=2pt, line cap=round,
  },
  slabel/.style={
    font=\sffamily\fontsize{6.5}{7.5}\selectfont\bfseries,
    anchor=north west,
  },
]

\node[slabel, text=cBlue!70!black] (l1) at (0, 0)
  {Agent declarations};
\node[codeblock, below=0pt of l1.south west, anchor=north west]
  (c1) {%
\textcolor{cBlue}{\bfseries analyst}
  = agent(role="\textcolor{cBlue}{analyst}",\\
\quad prompt="read \{\{target\}\}
  via \{\{\textcolor{cTeal}{cov}\}\}",\\
\quad tools=\{read\}, model=M)\\
\textcolor{cBlue}{\bfseries explorer}
  = agent(role="\textcolor{cBlue}{explorer}",\\
\quad prompt="craft from
  \{\{\textcolor{cBlue}{analyst}.out\}\}\\
\qquad\quad\ \ guided by
  \{\{\textcolor{cTeal}{branch}\}\}",\\
\quad tools=\{read,exec\}, model=M)\\
\textcolor{cBlue}{\bfseries validator}
  = agent(role="\textcolor{cBlue}{validator}",\\
\quad prompt="confirm crash in
  \{\{\textcolor{cBlue}{probes}.out\}\}\\
\qquad\quad\ \ \ \ using
  \{\{\textcolor{cTeal}{san}\}\}",\\
\quad tools=\{read,exec\}, model=M)%
};
\draw[lbar, cBlue!50]
  ([xshift=-2pt]c1.north west) -- ([xshift=-2pt]c1.south west);

\node[slabel, text=black!50,
  below=2pt of c1.south west, anchor=north west] (l2) {Topology};
\node[codeblock, below=0pt of l2.south west, anchor=north west]
  (c2) {%
\textcolor{cBlue}{\bfseries probes}
  \ \ = \textbf{fanout}(\textcolor{cBlue}{explorer}, k=8)\\
\textcolor{cBlue}{analyst}
  >> \textcolor{cBlue}{probes}
  >> \textcolor{cBlue}{validator}\\
\textcolor{cBlue}{validator}.on\_fail
  >> \textcolor{cBlue}{analyst}%
};
\draw[lbar, black!35]
  ([xshift=-2pt]c2.north west) -- ([xshift=-2pt]c2.south west);

\begin{scope}[on background layer]
  \node[draw=black!25, rounded corners=3pt, fill=black!2,
        inner xsep=6pt, inner ysep=5pt,
        fit=(l1)(c1)(l2)(c2)] {};
\end{scope}

\end{tikzpicture}
\end{minipage}%
\hfill
\begin{minipage}[t]{0.52\textwidth}
\centering
\textbf{Compiled topology}\\[3pt]
\resizebox{\textwidth}{!}{%
\begin{tikzpicture}[
  A/.style={-{Stealth[length=2mm, width=1.8mm]}, line width=1.1pt,
    draw=black!65},
  Aback/.style={-{Stealth[length=2mm, width=1.8mm]}, line width=1pt,
    draw=cTeal, dashed},
  hd/.style={draw=cBlue!55, fill=cBlue!6, rounded corners=2pt,
    inner sep=4pt, font=\normalsize, minimum width=1.7cm,
    minimum height=0.6cm, align=center},
  op/.style={draw=black!40, fill=black!3, rounded corners=2pt,
    inner sep=4pt, font=\normalsize\itshape,
    minimum width=1.3cm, minimum height=0.6cm, align=center},
  obs/.style={font=\scriptsize\bfseries, text=cTeal!70!black,
    draw=cTeal!55, fill=cTeal!8, rounded corners=1.5pt,
    inner xsep=3pt, inner ysep=1.5pt},
]
\node[hd] (ana) at (0, 0) {analyst};
\node[op] (fan) at (2.4, 0) {fanout};

\node[hd, minimum width=1.4cm] (e1) at (5.0, 1.6)  {explorer$_1$};
\node[hd, minimum width=1.4cm] (e2) at (5.0, 0.7)  {explorer$_2$};
\node[font=\Large, text=black!55] (edots) at (5.0, -0.05) {$\vdots$};
\node[hd, minimum width=1.4cm] (e8) at (5.0, -0.85) {explorer$_8$};

\node[hd] (val) at (8.4, 0) {validator};

\node[obs, above=4pt of ana.north] (obscov) {cov};
\node[obs, above=4pt of e1.north] (obsbr) {branch};
\node[obs, above=4pt of val.north] (obssan) {san};

\draw[A] (ana.east) -- (fan.west);
\draw[A] (fan.east) -- (e1.west);
\draw[A] (fan.east) -- (e2.west);
\draw[A] (fan.east) -- (e8.west);
\draw[A] (e1.east) -- (val.north west);
\draw[A] (e2.east) -- (val.west);
\draw[A] (e8.east) -- (val.south west);

\draw[Aback]
  (val.south) .. controls +(-90:1.7) and +(-90:1.7) .. (ana.south)
  node[font=\small\itshape, pos=0.5, above=1pt, text=black!60,
       fill=white, inner xsep=3pt, inner ysep=1pt]
  {validator.on\_fail};

\coordinate (retryBot) at ([yshift=-20pt]$(ana.south)!0.5!(val.south)$);

\begin{scope}[on background layer]
  \node[draw=black!25, rounded corners=3pt, fill=black!2,
        inner xsep=8pt, inner ysep=6pt,
        minimum height=5.6cm,
        fit=(ana)(val)(e1)(e8)(retryBot)(obscov)(obsbr)(obssan)] (diagbox) {};
\end{scope}

\node[anchor=north west, font=\scriptsize, text=black!65]
  at ([yshift=-4pt]diagbox.south west)
  {\tikz[baseline=-0.6ex]{\draw[A,line width=0.9pt] (0,0) -- (5mm,0);}\,data flow
   \quad
   \tikz[baseline=-0.6ex]{\draw[Aback,line width=0.9pt] (0,0) -- (5mm,0);}\,structural feedback
   \quad
   \tikz[baseline=-0.6ex]{\node[obs, inner xsep=2pt, inner ysep=0.5pt]{\strut obs};}\,feedback channel};
\end{tikzpicture}%
}%
\end{minipage}
\caption{An \textsc{AgentFlow} program (left) and its compiled
topology (right).
\textcolor{cBlue}{Blue} names match blue nodes;
\textcolor{cTeal}{teal} channels label the structural signal each
agent reads.
The validator consumes all eight explorer outputs through
\texttt{\{\{probes.out\}\}} and decides which (if any) reproduces
the crash; no separate join operator is needed.}
\Description{Two-panel figure. Left: AgentFlow DSL code with two
color-coded sections separated by colored left bars. The first
section, "Agent declarations" (blue bar), declares three agents --
analyst, explorer, validator -- each with a prompt template that
references both upstream agent outputs (analyst.out, probes.out)
and feedback channels (cov, branch, san) using Jinja-style
double braces. The second section, "Topology" (gray bar), creates
a parallel family probes = fanout(explorer, k=8), wires the
pipeline analyst >> probes >> validator, and adds a retry edge
validator.on_fail >> analyst. Right: a horizontal pipeline of
analyst, fanout, three representative explorers (times eight in
total), and validator. Eight directed edges fan from the
explorers directly into validator (no separate merge node). A
dashed retry arc runs from validator back to analyst. Small teal
badges above each agent label the feedback channel they
read: cov for analyst, branch for explorers, san for validator.}
\label{fig:dsl-example}
\end{figure*}
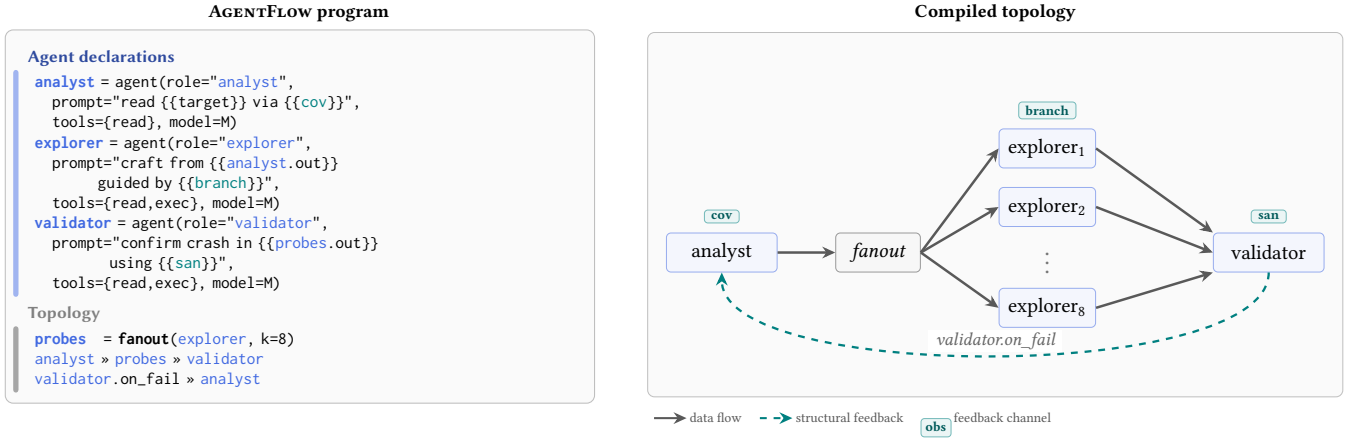

\paragraph{Example}
The program in Figure~\ref{fig:dsl-example} declares three agent
roles.
The \emph{analyst} reads the target source code and produces a
summary of validation logic and reachable code paths.
The \emph{explorer} receives the analyst's output alongside
branch-coverage data and crafts inputs that target specific guard
conditions.
The \emph{validator} receives merged crash reports and sanitizer
output and confirms whether a genuine vulnerability was triggered.
The topology fans out the explorer into eight independent copies,
feeds all eight outputs into the validator, and retries from the
analyst on validation failure.
Each agent's template references only the feedback channels
relevant to its role: the analyst sees coverage, the explorer sees
branch data, and the validator sees sanitizer output.
The three-node sequential chain assumed by most prior work is the
$k{=}1$, retry-free, single-role special case of this program.

\paragraph{Nodes}
The basic node form is
$\mathsf{agent}(\rho, \pi, m, \phi)$: role label~$\rho$,
double-brace template string~$\pi$, model identifier~$m$,
and tool set~$\phi \subseteq \mathcal{T}$.
The role label is a human-readable tag (e.g., \texttt{analyst},
\texttt{explorer}, \texttt{validator}) that identifies the agent in
diagnostics and in the archive.
These labels are task-specific mnemonics rather than a fixed
ontology.
The template~$\pi$ is a Jinja string with free variables drawn from
two sources: upstream node outputs
(written \texttt{\{\{~analyst.out~\}\}}) and feedback channels
(written \texttt{\{\{~cov~\}\}} or \texttt{\{\{~san~\}\}}).
At runtime, the scheduler binds each free variable to concrete
data before dispatching the agent.
The tool set~$\phi$ restricts which tools the agent may invoke;
agents in the same harness may receive disjoint tool sets, so that
a read-only analyst cannot execute the target binary and an
explorer cannot modify source code.

A single structural operator, $\mathsf{fanout}(n, k)$, lifts a
node into a parallel family by cloning $n$ into $k$ independent
copies, each receiving the same upstream context but producing
an independent output.
When a downstream agent's template references the family's
output (e.g.\ \texttt{\{\{probes.out\}\}}), the runtime binds
that variable to a \emph{JSON list} of the $k$ individual
outputs, preserving the order in which they complete.
The complementary join is not a separate constructor: any
downstream agent whose template references the family's outputs
serves as the join point and aggregates them through its own
prompt (e.g., a verifier that selects the most promising
candidate, or an agent that filters outputs whose sanitizer
field reports a crash).

\paragraph{Edges}
An edge $n_1 \to n_2$ declares that $n_1$'s output is in scope
for $n_2$'s template.
The type system (rule \textsc{T-Edge} in Figure~\ref{fig:dsl-rules})
enforces that $n_2$'s template actually references $n_1$'s
output; an edge to an agent that ignores the input is ill-typed
and rejected before execution.
A second edge form, $n_1 \to_{g} n_2$, fires only when $n_1$'s
runtime outcome matches the guard
$g \in \{\mathsf{ok}, \mathsf{fail}\}$ (rule \textsc{T-Branch} in
Figure~\ref{fig:dsl-rules}); the surface syntax
\texttt{n.on\_fail~>>~m} compiles to $n \to_{\mathsf{fail}} m$.
A node may have any number of outgoing edges with mixed forms,
so unconditional successors and guarded branches coexist on the
same node: the example in Figure~\ref{fig:dsl-example} sends the
validator's output forward through one data edge and back to the
analyst through a $\to_{\mathsf{fail}}$ retry edge.
Parallelism is expressed at the node level: a parallel family of
$k$ agents is the single node $\mathsf{fanout}(n, k)$ wired by
ordinary edges, and the join point is whichever downstream agent
references the family's outputs in its template.

\paragraph{feedback channels}
Structural execution signals (test output, program stdout/stderr,
coverage data, and sanitizer reports) are feedback channels.
We write $\mathcal{O}$ for the set of all such channels.
Any node whose template references a channel variable
(e.g., \texttt{\{\{~test\_output~\}\}}, \texttt{\{\{~cov~\}\}},
\texttt{\{\{~san~\}\}}) receives the corresponding data at runtime.
Which agents consume which channels is a property of the templates
in~$\Sigma$, not a separate routing function.
The optimizer controls the allocation by editing templates.
The set~$\mathcal{O}$ adapts to the domain: general
software-engineering tasks expose test results and stdout/stderr;
security targets additionally expose coverage and sanitizer output.

\paragraph{Well-formedness}
Write $\mathit{In}(n)$ for the set of output references from
$n$'s incoming edges, and $\mathit{Src}(P)$ for the nodes with no
incoming edges (the sources).
A well-formed \textsc{AgentFlow} program satisfies three conditions.
First, every node is well-typed: each template's free variables
resolve to upstream outputs or feedback channels, i.e.,
$\mathit{fv}(\pi) \subseteq \mathit{In}(n) \cup \mathcal{O}$
(rule \textsc{T-Agent} in Figure~\ref{fig:dsl-rules}).
Second, every edge $n_1 \to n_2$ has $n_2$'s template actually
referencing $n_1$'s output (rule \textsc{T-Edge}).
Third, the graph is connected: every node is reachable from
some source (rule \textsc{T-Conn}: $\forall n \in \mathcal{N}.\;
\exists\, n_0 \in \mathit{Src}(P).\; n_0
\rightsquigarrow_{\mathcal{E}} n$).
The top-level rule \textsc{T-Pipe} combines all three checks.
In plain language, the validator asks three questions before
any candidate harness is executed:
\begin{itemize}[leftmargin=*,nosep]
  \item Does every prompt reference only data that will
    actually exist at runtime?
  \item Does every declared edge feed information the
    downstream prompt really uses?
  \item Is every node on some path from a source node, so no
    disconnected component is left behind?
\end{itemize}
Figure~\ref{fig:dsl-rules} expresses these same checks as
typing rules.

\paragraph{Runtime}
Execution of a well-formed program follows the DSL itself: the
runtime walks the directed graph, dispatching each node when all
its template variables are bound, and routes the target's
feedback values to every agent whose template references the
corresponding channel.
This is what makes structural feedback available inside the
harness without a dedicated routing component: any agent opts in
to a channel by referencing it
(\texttt{\{\{~test\_output~\}\}}, \texttt{\{\{~cov~\}\}},
\texttt{\{\{~san~\}\}}) in its template.
Algorithm~\ref{alg:harness-opt} (Section~\ref{sec:optimization})
and the implementation in Section~\ref{sec:system} give the
operational details; the formal type system in
Figure~\ref{fig:dsl-rules}(b) is what the outer-loop optimizer
checks before dispatching any candidate harness.

\paragraph{Typing}
The proposer that edits an \textsc{AgentFlow} program is
itself an agent: its proposals are non-deterministic
and routinely malformed (deleted agents still referenced
downstream, renamed nodes whose callers no longer resolve,
edges to agents whose prompt never reads the incoming field,
disconnected components).
Without typing, each such proposal is caught only at dispatch
time, after the harness compiles, the LLM model fires, and the
target runs, burning the largest item in the budget on
programs that were structurally broken from the start.
The three judgements (\textsc{T-Agent}, \textsc{T-Edge},
\textsc{T-Conn}) reduce to a single linear-time graph
traversal that needs no model call, so malformed proposals
are rejected before the scheduler spins up an agent.
Pure-side-effect agents (no output another agent reads)
declare a sentinel boolean status flag, which keeps the
\textsc{T-Edge} check linear-time.

\begin{propbox}
\begin{proposition}[Well-formedness soundness]
\label{prop:executability}
If $\vdash P : \mathit{Harness}$, then no agent in $P$ is ever
dispatched with an unbound template variable: every variable
referenced in any agent's template is bound either by an
upstream node's output or by an feedback channel emitted
by the target.
\end{proposition}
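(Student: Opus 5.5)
The plan is to argue directly from the premises of \textsc{T-Pipe} together with the runtime's dispatch rule, by fixing an arbitrary agent node $n$ and an arbitrary dispatch event for $n$. Since $\vdash P : \mathit{Harness}$, the premises of \textsc{T-Pipe} give $\vdash n : \mathit{Node}$ for every node. Inverting \textsc{T-Agent} (or \textsc{T-Fanout} followed by \textsc{T-Agent} for the cloned node inside a parallel family) yields $\mathit{fv}(\pi_n) \subseteq \mathit{In}(n) \cup \mathcal{O}$. This splits every free variable $x$ of $\pi_n$ into two cases, which I will treat separately.

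\textbf{Case $x \in \mathcal{O}$.} The channel is emitted by the target, and the runtime routes feedback values to every agent whose template references the corresponding channel. So $x$ is bound at dispatch. Two details need checking here. The first is that $\mathcal{O}$ is fixed per domain by the analyst's build configuration (Section~\ref{sec:threat-model}), so a referenced channel is actually produced. The second is that a channel with no value yet, such as $\mathsf{san}$ before any run, is bound to an empty or sentinel value. I will state this convention explicitly as part of the runtime semantics.

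\textbf{Case $x \in \mathit{In}(n)$.} By definition of $\mathit{In}$, $x = n'.\mathsf{out}$ for some data edge $n' \to n \in \mathcal{E}$. The runtime dispatches $n$ only when all its template variables are bound. The substantive claim is therefore that this waiting condition is eventually met, i.e.\ that $n'$ is itself dispatched and produces its output. For a fanout family, that output is the JSON list of the $k$ outputs. I will prove this by induction along paths from sources, using \textsc{T-Conn}: every node is reachable from some $n_0 \in \mathit{Src}(P)$. The base case is that sources have $\mathit{In}(n_0)=\varnothing$, so their templates mention only channels and they are dispatchable immediately. The inductive step is that if all data-predecessors of $n$ have produced output, then $n$ is dispatchable and all its variables are bound. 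Pure-side-effect agents still emit their sentinel status flag, so edges out of them also carry a bound value.

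\textbf{Main obstacle.} The hard step is the interaction with cycles and guarded edges. The retry edge $\mathit{validator} \to_{\mathsf{fail}} \mathit{analyst}$ creates a cycle, and a naive induction on graph depth fails. My plan is to note that $\mathit{In}(n)$ is defined only from unguarded data edges $n' \to n$. Guarded edges $\to_g$ contribute control flow but no template variables, so the induction runs over the subgraph of data edges. I must then argue that this subgraph is acyclic, or at least that every data-predecessor has produced an output before $n$'s first dispatch. If acyclicity is not implied by the rules as stated, I will either add it as an implicit assumption of the scheduler or weaken the conclusion to its safety form. In that form, whenever $n$ is dispatched, which happens only once its variables are bound, every variable is bound by an upstream output or a channel; this follows immediately from the dispatch rule plus the \textsc{T-Agent} inclusion. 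On re-dispatch after a $\mathsf{fail}$ retry, the previously produced upstream outputs remain bound, so the argument carries over to every iteration.
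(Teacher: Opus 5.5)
Your proposal is correct and follows the paper's own sketch: you invert \textsc{T-Agent} to get $\mathit{fv}(\pi_n)\subseteq\mathit{In}(n)\cup\mathcal{O}$, then bind the $\mathit{In}(n)$ variables through completed data-predecessors under the dispatch rule and the $\mathcal{O}$ variables through the target's channels. Your extra caution about cycles is justified, because \textsc{T-Conn} only asserts reachability from a source and does not exclude a data-edge cycle such as $s\to a$, $a\to b$, $b\to a$, which type-checks but deadlocks; the paper's appeal to \textsc{T-Conn} for a ``well-defined dependency order'' therefore does not give progress, and what both arguments actually secure is the safety reading you fall back on (the dispatch rule plus \textsc{T-Agent}), which is exactly what the statement asserts.
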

\end{propbox}

\begin{proof}[Proof sketch]
By \textsc{T-Agent}, every agent node $n$ in a well-typed $P$
satisfies
$\mathit{fv}(\pi_n) \subseteq \mathit{In}(n) \cup \mathcal{O}$.
Each variable in $\mathit{In}(n)$ is the output channel of an
upstream node $n' \to n \in \mathcal{E}$, which is bound when
$n'$ completes; the runtime dispatches $n$ only after all such
predecessors have completed, and \textsc{T-Conn} guarantees
that every node is reachable from a source so the dependency
order is well-defined.
Each variable in $\mathcal{O}$ is the name of a feedback
channel emitted by the target on each trial, which is bound
the moment the target produces a value for that channel.
Thus every reference in $\pi_n$ is bound at dispatch time.
\end{proof}

\noindent
Proposition~\ref{prop:executability} says any DSL edit
producing $\vdash P : \mathit{Harness}$ is safe to dispatch
without further structural checks; the outer loop applies the
well-formedness check to every proposed edit before it spends
any inference budget on execution.
The three typing rules are individually standard graph
well-formedness conditions; the contribution is not a
novel type theory but a \emph{practical budget guard}: in our
experiments, roughly $20\%$ of proposer outputs fail the
check and are rejected before consuming any LLM model
inference, which is the dominant cost in the loop.
\textsc{AgentFlow}'s restriction to static topologies (no
runtime agent spawning, no within-execution topology changes)
keeps every candidate harness statically analyzable.

\input{fig-overview-small}

\section{AgentFlow Framework}
\label{sec:optimization}

\begin{algorithm}[t]
\caption{\textsc{HarnessOpt}: Feedback-Guided Harness Synthesis}
\label{alg:harness-opt}
\begin{algorithmic}[1]
\Require Model $\mathcal{M}$, task set $\mathcal{D}$, feedback $\Omega$, score $S$, budget $K$
\Ensure optimized harness $H^{\star}$
\State $\mathcal{X} \gets \varnothing$;\; $d \gets \varnothing$;\; $H^{\star} \gets \varnothing$;\; $s^{\star} \gets 0$
  \Comment{init}
\For{$i = 1, \ldots, K$}
  \State $H \gets \texttt{Propose}_{\mathcal{M}}(d,\, \mathcal{X})$
    \Comment{emit DSL program}
  \State $\{\sigma_T\} \gets \texttt{ExecObserve}(\mathcal{M}, H, \mathcal{D}, \Omega)$
    \Comment{execute \& observe}
  \State $s \gets S\bigl(\{\sigma_T\}\bigr)$
    \Comment{score}
  \If{$s > s^{\star}$}
    \State $H^{\star} \gets H$;\; $s^{\star} \gets s$
  \EndIf
  \State $d \gets \texttt{Diagnose}_{\mathcal{M}}\!\bigl(\{(T,\sigma_T)\}_{T\in\mathcal{D}},\; \mathcal{X}\bigr)$
    \Comment{diagnose}
  \State $\mathcal{X} \gets \mathcal{X} \cup \{(H,\, \{\sigma_T\},\, d)\}$
\EndFor
\State \Return $H^{\star}$
\end{algorithmic}
\end{algorithm}

\paragraph{Optimization objective}
Fix a task set $\mathcal{D}$ and a target environment.
Running $H$ on task~$T$ produces a per-agent trace bundle
$\tau = \{\tau_a\}_{a \in \mathcal{A}}$ and a feedback
bundle $\sigma_T = \Omega(\tau)$.
A domain-specific score function $S$ maps the feedback
bundles to a scalar.
The optimization problem is
\[
H^{\star} \;=\; \arg\max_{H \in \mathcal{H}} \;
  S\!\bigl(\{\sigma_T\}_{T \in \mathcal{D}}\bigr),
\]
where $\mathcal{H}$ is the space of well-formed \textsc{AgentFlow}
programs (Section~\ref{sec:dsl}).
In TerminalBench-2, $S$ is the hidden-test pass rate
$\tfrac{1}{|\mathcal{D}|}\sum_T V(\sigma_T)$;
in the Chrome campaign, $S$ is the count of unique sanitizer
crash signatures.
Because $\mathcal{H}$ is a program space rather than a continuous
parameter space, the optimizer searches it iteratively: at step~$i$
it uses past trials to propose DSL edits that produce $H_{i+1}$.

\subsection{Overview}
\label{sec:opt-overview}

Algorithm~\ref{alg:harness-opt} gives the complete procedure
(Figure~\ref{fig:overview-small}).
The inputs are a LLM model $\mathcal{M}$, a task set
$\mathcal{D}$, a runtime feedback function~$\Omega$
(Section~\ref{sec:bg-feedback}), a domain-specific score
function~$S$, and a step budget~$K$.
Each iteration proceeds through four phases:

\begin{enumerate}[leftmargin=*,nosep]
  \item \textbf{Propose}
    (Section~\ref{sec:opt-propose}).
    An LLM call reads the most recent diagnosis~$d$ together with
    the archive~$\mathcal{X}$ and emits a new \textsc{AgentFlow}
    program~$H$.
    The proposed program must pass the well-formedness check
    (Proposition~\ref{prop:executability}) before dispatch;
    ill-typed proposals are rejected and re-proposed.

  \item \textbf{Execute \& Observe}
    (Section~\ref{sec:opt-execute}).
    The proposed harness~$H$ is dispatched on every task
    $T \in \mathcal{D}$.
    For each task the runtime collects per-agent traces
    $\{\tau_a\}_{a \in H.\mathcal{A}}$ and then applies the
    feedback function~$\Omega$ to obtain a structured
    signal~$\sigma_T$.

  \item \textbf{Score}
    (Section~\ref{sec:opt-execute}).
    A domain-specific function~$S$ reduces the feedback bundle to
    a single number~$s$ (e.g.\ task pass rate or unique crash
    count).
    If $s$ exceeds the incumbent, $H^{\star}$ is updated.

  \item \textbf{Diagnose}
    (Section~\ref{sec:opt-diagnose}).
    An LLM call reads the \emph{full} feedback
    bundle~$\{\sigma_T\}$ (not just the score) alongside each
    agent's action traces, and produces a structured diagnosis~$d$
    identifying which agent failed, why, and what harness edit
    would fix it.
    This diagnosis feeds the next Propose step.
\end{enumerate}

\noindent
After diagnosis, the archive is updated and the
loop returns to Propose.
The loop terminates after $K$ steps and returns~$H^{\star}$.

Because every candidate is a well-formed \textsc{AgentFlow}
program, the search space inherits the structure of the DSL:
the five-component view
$H = (\mathcal{A}, \mathcal{G}, \Sigma, \Phi, \Psi)$ from
Section~\ref{sec:formalization} maps one-to-one onto editable DSL
fields, so a single iteration can simultaneously add an agent
($\mathcal{A}$), rewire the communication graph ($\mathcal{G}$),
update a message template ($\Sigma$), change a tool binding
($\Phi$), or convert a sequential chain into a fan-out
($\Psi$), all expressed as a local rewrite of the same program.
The typing rules from Section~\ref{sec:dsl} act as a
\emph{budget guard}: any structurally broken rewrite is caught
in linear time before the expensive LLM model dispatch.

\subsection{Propose}
\label{sec:opt-propose}

The proposer reads the diagnosis~$d$ and the archive~$\mathcal{X}$,
and emits a new \textsc{AgentFlow} program~$H_{i+1}$.
A single proposal can:
\begin{itemize}[leftmargin=*,nosep]
  \item add or remove an agent ($\mathcal{A}$),
  \item rewire a communication edge or add a retry
    back-edge ($\mathcal{G}$),
  \item rewrite an agent's prompt template or rebind an
    feedback channel ($\Sigma$),
  \item add or restrict an agent's tool set ($\Phi$),
  \item change the coordination protocol, e.g.\ convert a
    sequential chain to a fan-out/merge
    ensemble ($\Psi$).
\end{itemize}

\noindent
The proposer selects which of these to apply by reading the
diagnosis.
A bottleneck localized to a single agent's strategy is typically
answered by a prompt rewrite ($\Sigma$) or tool-set change
($\Phi$); a missing capability is answered by adding a specialist
agent ($\mathcal{A}$) and wiring it into the topology
($\mathcal{G}$); a systematic coordination failure (e.g.\ agents
duplicating work or submitting conflicting outputs) is answered by
a protocol change ($\Psi$).
The archive~$\mathcal{X}$ provides historical context so the
proposer avoids re-proposing edits that were tried and failed in
earlier iterations.

The proposer modifies the harness, not the target: it never edits
source code, generates test inputs, or manipulates the grading
infrastructure.
This is enforced by the DSL: the proposer's output
must parse as a well-formed \textsc{AgentFlow} program, and the
DSL has no constructs for modifying the target.
Each candidate~$H_{i+1}$ passes a three-stage validation
pipeline before dispatch:
(1)~syntactic parsing of the emitted DSL code,
(2)~the well-formedness check from
Proposition~\ref{prop:executability} (template variables resolve,
edges are referenced, graph is connected), and
(3)~a one-shot smoke test on a single task to catch runtime
failures not visible at the type level (e.g.\ a tool that
immediately errors).
If any stage fails, the proposer is given at most two retries per
iteration before the optimizer falls back to the incumbent
harness.

\subsection{Execute, Observe, and Score}
\label{sec:opt-execute}

Once a proposed harness passes validation, it is dispatched on
every task $T \in \mathcal{D}$.
For each task the runtime collects per-agent traces
$\{\tau_a\}_{a \in H.\mathcal{A}}$ and then applies the
feedback function~$\Omega$ to obtain a structured
signal~$\sigma_T$.
The signal includes the test verdict (pass/fail), program
stdout/stderr, and, on instrumented builds, line-level
coverage and sanitizer reports
(Section~\ref{sec:bg-feedback}).

\paragraph{Score function}
The score function~$S$ reduces the feedback bundle to a single
number:
\[
s \;=\; S\!\bigl(\{\sigma_T\}_{T \in \mathcal{D}}\bigr).
\]
Two instantiations appear in this paper:
\begin{itemize}[leftmargin=*,nosep]
  \item \emph{TerminalBench-2}: $S$ is the hidden-test pass rate
    $\tfrac{1}{|\mathcal{D}|}\sum_T V(\sigma_T)$, where
    $V(\cdot) \in \{0,1\}$ is the hidden-test verdict.
  \item \emph{Chrome}: $S$ is the number of distinct
    vulnerabilities discovered, identified by unique
    AddressSanitizer crash signatures.
    Each unique signature corresponds to a distinct memory-safety
    bug (e.g.\ a heap-buffer-overflow at a specific call site).
    A harness that triggers three distinct crashes scores higher
    than one that triggers the same crash three times.
\end{itemize}

\noindent
The score determines \emph{whether} to keep the new harness:
$H^{\star}$ is
updated only when $s$ strictly exceeds the incumbent.
The full feedback bundle~$\{\sigma_T\}$ is then passed to the
Diagnose step, which reads the detailed signals (coverage maps,
sanitizer output, stderr) to determine \emph{what} to fix next.

\paragraph{Archive}
The archive~$\mathcal{X}$ stores
$(H_i,\, \{\sigma_T\}_{T \in \mathcal{D}},\, d_i)$ triples from
every past iteration (harness, feedback, and diagnosis) so
that both the diagnoser and proposer can consult the full
optimization history.
In practice the archive is managed as a fixed-size window: the
top-scoring iteration and the most recent $w{=}3$ iterations are
stored in full, and older entries are compressed to one-line
summaries.
This window keeps the LLM context tractable while preserving the
most relevant history for the proposer to avoid repeating
unsuccessful edits.
Implementation details of the archive manager, prompt templates,
and the per-iteration LLM calls are deferred to
Section~\ref{sec:system}.

\subsection{Diagnose}
\label{sec:opt-diagnose}

The diagnoser receives a per-task bundle consisting of the task
objective, each agent's action summary (truncated or LLM-summarized
when traces exceed the context window), and all available runtime
feedback channels.
In multi-agent harnesses the diagnoser must also
\emph{attribute responsibility}: given that the harness failed a
task, which agent's behavior most directly explains the gap between
the intended outcome and the observed execution?

The diagnosis is structured as four fields:

\begin{enumerate}[leftmargin=*,nosep]
  \item \textbf{Bottleneck agent}: which agent $a \in \mathcal{A}$
    (or which interaction along an edge in $\mathcal{G}$)
    most directly caused the failure.
  \item \textbf{Intended behavior}: what that agent tried to do,
    as reported in its action trace.
  \item \textbf{Actual execution}: what the target program
    actually did, as reported by the runtime feedback channels
    (test verdict, program stderr, and on instrumented builds,
    line-level coverage and sanitizer reports).
  \item \textbf{Corrective edit}: a natural-language description
    of what harness change would close the gap between intended
    and actual behavior.
\end{enumerate}

\noindent
The corrective-edit field is
\emph{harness-directed}: it describes changes to agent prompts
($\Sigma$), tool bindings ($\Phi$), topology ($\mathcal{G}$), or
coordination protocol ($\Psi$), never changes to the target source
code or test inputs.

The feedback channels directly affect diagnosis quality.
When only the binary pass/fail verdict is available, the diagnoser
can observe \emph{that} the harness failed but not \emph{why}.
When coverage data is also available, the diagnoser can localize
the failure to specific code regions the agents never reached.
When sanitizer output is available, the diagnoser can distinguish
a genuine vulnerability from a benign crash or a false positive.
The ablation study in Section~\ref{sec:rq2} provides empirical
evidence for the value of these richer signals: disabling prompt
edits (which are the primary consumer of diagnostic detail) causes
the largest performance drop.

\begin{figure*}[!t]
\centering
\begin{minipage}[c]{0.46\linewidth}
\centering
\begin{tikzpicture}
\begin{axis}[
  width=\linewidth,
  height=7.2cm,
  xmin=0.5, xmax=20.4,
  ymin=33, ymax=88,
  xlabel={Optimization step},
  ylabel={Highest pass rate so far on TerminalBench-2 (\%)},
  xtick={1,4,7,10,13,16},
  ytick={40,50,60,70,80},
  ymajorgrids=false,
  axis on top,
  clip=false,
  legend style={
    at={(0.98,0.02)},
    anchor=south east,
    fill=white,
    draw=black!15,
    inner sep=2pt,
    row sep=-1pt,
    font=\fontsize{6}{7}\selectfont,
  },
  legend cell align=left,
]

\addplot[draw=cGray!55, dashed, line width=0.35pt, forget plot]
  coordinates {(0.5,60.9) (16.5,60.9)};
\addplot[draw=cGray!55, dashed, line width=0.35pt, forget plot]
  coordinates {(0.5,65.6) (16.5,65.6)};
\addplot[draw=cGray!55, dashed, line width=0.35pt, forget plot]
  coordinates {(0.5,66.9) (16.5,66.9)};
\addplot[draw=cGray!55, dashed, line width=0.35pt, forget plot]
  coordinates {(0.5,69.0) (16.5,69.0)};
\addplot[draw=cGray!55, dashed, line width=0.35pt, forget plot]
  coordinates {(0.5,72.4) (16.5,72.4)};
\addplot[draw=cGray!55, dashed, line width=0.35pt, forget plot]
  coordinates {(0.5,74.6) (16.5,74.6)};
\addplot[draw=cGray!55, dashed, line width=0.35pt, forget plot]
  coordinates {(0.5,77.3) (16.5,77.3)};
\addplot[draw=cGray!55, dashed, line width=0.35pt, forget plot]
  coordinates {(0.5,77.7) (16.5,77.7)};

\addplot[draw=cRed, dashed, line width=0.8pt]
  coordinates {(0.5,76.4) (16.5,76.4)};
\addlegendentry{Meta-Harness}
\addplot[draw=cOrange, dashed, line width=0.8pt]
  coordinates {(0.5,81.4) (16.5,81.4)};
\addlegendentry{ForgeCode}

\node[font=\fontsize{5.5}{6}\selectfont, color=cGray, anchor=west]
  at (axis cs:15.8,60.9) {Claude Code};
\node[font=\fontsize{5.5}{6}\selectfont, color=cGray, anchor=west]
  at (axis cs:15.8,64.6) {Terminus 2};
\node[font=\fontsize{5.5}{6}\selectfont, color=cGray, anchor=west]
  at (axis cs:15.8,67.0) {Crux};
\node[font=\fontsize{5.5}{6}\selectfont, color=cGray, anchor=west]
  at (axis cs:15.8,69.6) {Mux};
\node[font=\fontsize{5.5}{6}\selectfont, color=cGray, anchor=west]
  at (axis cs:15.8,72.4) {Droid};
\node[font=\fontsize{5.5}{6}\selectfont, color=cGray, anchor=west]
  at (axis cs:15.8,74.0) {TongAgents};
\node[font=\fontsize{5.5}{6}\selectfont\bfseries, color=cRed, anchor=west]
  at (axis cs:15.8,76.4) {Meta-Harness};
\node[font=\fontsize{5.5}{6}\selectfont, color=cGray, anchor=west]
  at (axis cs:15.8,78.4) {Terminus-KIRA};
\node[font=\fontsize{5.5}{6}\selectfont, color=cGray, anchor=west]
  at (axis cs:15.8,80.0) {Capy};
\node[font=\fontsize{5.5}{6}\selectfont\bfseries, color=cOrange, anchor=west]
  at (axis cs:15.8,82.0) {ForgeCode};

\addplot[
  color=cBlue,
  line width=1.2pt,
  const plot mark left,
]
coordinates {
  (1,35.2) (2,36.0) (3,42.7) (4,46.1)  (5,64.0)
  (6,73.0) (7,76.4) (8,77.5) (9,79.8)  (10,79.8)
  (11,79.8) (12,84.3) (13,84.3) (14,84.3) (15,84.3)
  (16,84.3)
};
\addlegendentry{\textsc{AgentFlow} (this work)}

\end{axis}
\end{tikzpicture}
\end{minipage}\hfill
\begin{minipage}[c]{0.52\linewidth}
\centering

\setlength{\tabcolsep}{4pt}
\begin{tabular}{@{}lcc@{}}
\toprule
\textbf{Harness}
  & \textbf{Score (\%)} & \textbf{Date} \\
\midrule
\textbf{\textsc{AgentFlow} (this work)}
                                        & \textbf{84.3} & 2026-04-17 \\
\midrule
ForgeCode~\cite{tb2_forgecode}          & 81.4 & 2026-03-12 \\
Capy~\cite{tb2_capy}                    & 77.7 & 2026-03-12 \\
Terminus-KIRA~\cite{tb2_kira}           & 77.3 & 2026-02-22 \\
Meta-Harness~\cite{lee2026metaharness}  & 76.4 & 2026-03-30 \\
TongAgents~\cite{tb2_tongagents}        & 74.6 & 2026-02-22 \\
Droid~\cite{tb2_droid}                  & 72.4 & 2026-02-05 \\
Mux~\cite{tb2_mux}                      & 69.0 & 2026-02-13 \\
Crux~\cite{tb2_crux}                    & 66.9 & 2026-02-23 \\
Terminus~2~\cite{tb2_terminus2}         & 65.6 & 2026-02-06 \\
Claude Code~\cite{tb2_claudecode}       & 60.9 & 2026-02-07 \\
\bottomrule
\end{tabular}
\end{minipage}
\caption{Synthesis trajectory (left) and leaderboard comparison (right) on TerminalBench-2 (89 tasks, Claude~Opus~4.6, snapshot 2026-04-17).}
\label{fig:trajectory}
\end{figure*}

\section{Implementation}
\label{sec:system}

The optimizer is a single \textsc{HarnessOpt} loop
(Algorithm~\ref{alg:harness-opt}) instantiated identically
across both targets in Section~\ref{sec:evaluation};
this section lists the concrete components.

\paragraph{Per-run feedback bundle}
After the harness runs on a task, the runtime collects the
output of whichever feedback channels the target's build
provides (Section~\ref{sec:bg-feedback}) into a typed
bundle.
TerminalBench-2 environments wire up the first two; the
Chrome build in Section~\ref{sec:rq4} wires up all four.
The bundle entries are referenced by name in the
\textsc{AgentFlow} program (Section~\ref{sec:dsl}); adding
a new feedback channel to a target's build is an
out-of-band action that does not require touching the
synthesis algorithm.

\paragraph{Diagnoser, proposer, and archive}
Both the diagnoser and the proposer are LLM calls on the
same LLM model as the harness's inner agents in each
evaluation (Claude~Opus~4.6 for TerminalBench-2;
Kimi~K2.5 for Google~Chrome).
All three use each provider's default sampling and tool-use
schema, with Anthropic prompt cache enabled for the
Claude~Opus~4.6 runs ($71.2\%$ cache-hit rate;
Section~\ref{sec:rq1}); per-iteration prompts, model
identifiers, and CLI arguments ship with the artifact for
byte-reproducibility (Appendix~\ref{app:open-science}).
The diagnoser fills the four fields from
Section~\ref{sec:optimization} from the agents' traces and
the per-run feedback bundle (with one-shot LLM summarisation when
traces exceed the context window).
The archive is a fixed-size window: the top-scoring iteration
and the most recent $w{=}3$ iterations in full, older entries
one-line each.
The proposer reads diagnosis$+$archive and emits $H_{i+1}$ as
\textsc{AgentFlow} DSL; a single call takes 5--30\,s.

\paragraph{Validation, edits, and budget}
Each candidate $H_{i+1}$ passes the linear-time
well-formedness check from Section~\ref{sec:dsl}
(Proposition~\ref{prop:executability}: syntax,
graph-connectivity, and template-variable resolution) plus
a one-shot smoke test before dispatch on the full task set;
the proposer is given at most two retries per iteration.
On the TerminalBench-2 campaign of Section~\ref{sec:rq1},
the well-formedness check rejects approximately
\textbf{$20\%$} of proposer outputs as malformed
\textsc{AgentFlow} programs (measured as a fraction of
total proposer-emitted tokens), saving the much larger
cost of dispatching those candidate harnesses on the
$89$-task evaluation pool.

\section{Evaluation}
\label{sec:evaluation}

We evaluated \textsc{AgentFlow} by conducting a set of
experiments designed to answer the following questions:
\begin{itemize}[leftmargin=*,nosep]
  \item \textbf{RQ1: Effectiveness.}
    How does \textsc{AgentFlow} compare against
    state-of-the-art harnesses, and what does the synthesis
    trajectory look like?
  \item \textbf{RQ2: Ablation study.}
    How much performance is lost when structural edits,
    prompt edits, or tool edits are removed from the full
    search space?
  \item \textbf{RQ3: Real-world impact.}
    Can \textsc{AgentFlow} discover previously unknown
    vulnerabilities in production-quality codebases?
\end{itemize}

\paragraph{Setup}
RQ1--RQ2 evaluate the synthesized harness on
\textbf{Terminal~Bench-2}~\cite{terminalbench2} with
\textbf{Claude~Opus~4.6}~\cite{anthropic2025opus} as the
LLM model, against the ten publicly-ranked Claude~Opus~4.6 entries
on the leaderboard (snapshot 2026-04-17,
Figure~\ref{fig:trajectory}); the shared LLM model isolates
differences in harness design.
The candidate harness at every step is a well-formed
\textsc{AgentFlow} program (Section~\ref{sec:dsl}) and the
optimizer ranges over the full
$(\mathcal{A}, \mathcal{G}, \Sigma, \Phi, \Psi)$ design
space; \textsc{AgentFlow} denotes the final harness produced
by Algorithm~\ref{alg:harness-opt} after the optimization
campaign converges, whose compiled topology is in
Figure~\ref{fig:final-arch}.
All experiments run on a public cloud GPU/CPU pool under
each benchmark's default wall-clock budget; the
optimizer source is open-sourced under
\textsc{AgentFlow}~\cite{agentflowrepo}
(Appendix~\ref{app:open-science}).

\paragraph{Baselines}
\label{sec:baselines}
Figure~\ref{fig:trajectory} lists the publicly-ranked
Claude~Opus~4.6 entries on the TerminalBench-2 leaderboard.
\textbf{ForgeCode}~\cite{forgecode2026site} is the strongest
hand-engineered entry ($81.4\%$): a production-grade
multi-agent harness with domain-specific tool sets and
hand-tuned coordination logic.
\textbf{Meta-Harness}~\cite{lee2026metaharness} ($76.4\%$)
is the strongest prior \emph{synthesized} baseline: it runs an
outer-loop optimizer over a single agent's prompts and tool
bindings but keeps the topology fixed at $|\mathcal{A}|{=}1$.
\textbf{Capy}~\cite{tb2_capy} ($77.7\%$) is a
hand-engineered multi-agent system with a pre-defined role
hierarchy.
\textbf{Claude~Code}~\cite{tb2_claudecode} ($60.9\%$) is the
bare default Anthropic deployment, serving as the baseline for
what the LLM model achieves without harness engineering.
These four span the spectrum from bare model to fully
hand-tuned multi-agent systems.

\paragraph{Benchmarks}
TerminalBench-2 is the public agent-harness leaderboard: a
fixed pool of $89$ long-horizon terminal tasks (code
translation, machine-learning pipeline reconstruction,
distributed-systems setup, cryptanalysis, memory-corruption
analysis, vulnerability remediation, secret recovery), each
graded by a hidden test suite under a fixed wall-clock
budget, with a public ranking of complete systems (harness
plus model) on each LLM model.
Google~Chrome is one of the most extensively audited
open-source C/C++ codebases in the world; we use the public
Chrome Vulnerability Reward Program as the externally
validated ground truth for whether a finding is a previously
unknown bug.

\paragraph{Evaluation protocol}
\label{sec:leaderboard-protocol}
\label{sec:multiseed-protocol}
We adopt the leaderboard standard protocol for
TerminalBench-2~\cite{terminalbench2}: the headline score is
the maximum task pass rate observed across multiple replays of a
fixed harness, so that our reported score is directly comparable
to the other public entries in
Figure~\ref{fig:trajectory}; every row in that table is graded
on the same $89$-task pool under the same LLM model and
wall-clock budget.
The optimizer produces a \emph{single, task-agnostic harness
program}: the same \textsc{AgentFlow} graph, prompts, and tool
bindings are deployed identically on all $89$ tasks at each
evaluation step.
The diagnoser observes per-task outcomes to localize harness
weaknesses, but every proposed edit acts on the shared
program; the DSL contains no per-task conditional branches, so
an edit is retained only if it raises the aggregate score
across all seven task categories simultaneously.
A topology change that helps one category at the expense of
others will not survive the aggregate gate.
The Chrome campaign
(Section~\ref{sec:rq3}) applies the same synthesis loop
and DSL to a wholly different domain, LLM model, and task type,
and produces externally validated findings, thus providing
evidence that the method transfers beyond the benchmark suite.

\subsection{RQ1: Effectiveness on TerminalBench-2}
\label{sec:rq1}

To answer this question, we run \textsc{AgentFlow} on
TerminalBench-2 with Claude~Opus~4.6 as the LLM
model and compare against the ten publicly-ranked harnesses on the leaderboard
(Figure~\ref{fig:trajectory}).
The synthesized \textsc{AgentFlow} harness passes
\textbf{75 of 89} tasks (\textbf{84.3\%})
under the same protocol.
This is the highest score in the public leaderboard snapshot
we evaluate against, $2.9$ percentage points above ForgeCode
and $7.9$ percentage points above Meta-Harness.
The gap over Meta-Harness ($7.9$\,pp, $\approx 7$ tasks) is
more substantial.

\paragraph{Synthesis trajectory}
Figure~\ref{fig:trajectory} (left) plots the running maximum
pass rate across the optimization campaign.
The trajectory climbs from $35.2\%$ to $84.3\%$ through three
phases, each targeting a different layer of harness design.
\textbf{Infrastructure} (Steps~1--5, $+28.8$\,pp): the
diagnoser localizes failures to missing environment guards: tool
bindings ($\Phi$) and coordination-protocol fixes ($\Psi$),
discovered from stdout/stderr feedback.
\textbf{Specialization} (Steps~6--9, $+15.8$\,pp): the proposer
adds specialist sub-agents ($\mathcal{A}$), retry edges
($\mathcal{G}$), and tool refinements ($\Phi$).
\textbf{Ensemble} (Step~12, $+4.5$\,pp to $84.3\%$): the
proposer rewrites the topology into a fan-out/merge ensemble
($\mathcal{G}$, $\Psi$) that runs independent attempts in
parallel and cross-validates before submission.
This final harness is what we report as \textsc{AgentFlow}
(topology in Figure~\ref{fig:final-arch}).
The three phases collectively touch all five formalization
components, illustrating that the DSL can express
cross-component edits within a single search process.

\begin{rqbox}
  \noindent\textbf{Result for RQ1: under the public
  TerminalBench-2 leaderboard protocol, \textsc{AgentFlow}
  reaches $84.3\%$, the highest
  score in the snapshot we evaluate against.
  The synthesis trajectory climbs from $35.2\%$ to $84.3\%$,
  and the edits applied along the way touch all five
  formalization components
  ($\mathcal{A}, \mathcal{G}, \Sigma, \Phi, \Psi$).}
\end{rqbox}

\begin{table}[t]
\centering
\caption{Ablation study for \textsc{AgentFlow}. \checkmark\ indicates the dimension is searchable.}
\label{tab:ablation-study}
\setlength{\tabcolsep}{3.5pt}
\renewcommand{\arraystretch}{1.08}
\begin{tabular}{@{}l ccccc c@{}}
\toprule
\textbf{Variant}
  & $\mathcal{A}$
  & $\mathcal{G}$
  & $\Sigma$
  & $\Phi$
  & $\Psi$
  & \textbf{Score} \\
& \scriptsize agents
  & \scriptsize topology
  & \scriptsize schemas
  & \scriptsize tools
  & \scriptsize coord.
  & \\
\midrule
\textbf{Full (\textsc{AgentFlow})}
  & \checkmark & \checkmark & \checkmark & \checkmark & \checkmark
  & \textbf{84.3} \\
No Structure Search
  & & & \checkmark & \checkmark &
  & 76.4 \\
No Prompt Search
  & \checkmark & \checkmark & & \checkmark & \checkmark
  & 51.8 \\
No Tool Search
  & \checkmark & \checkmark & \checkmark & & \checkmark
  & 71.9 \\
\bottomrule
\end{tabular}
\end{table}

\begin{table*}[t]
    \caption{Ten zero-day vulnerabilities in Google~Chrome discovered
    end-to-end by \textsc{AgentFlow} on \textbf{Kimi~K2.5}, every
    one of them accepted through the Chrome Vulnerability
    Reward Program and confirmed by the vendor. Rows without
    a public CVE are listed under the accepted vendor
    identifier currently attached to the report. Public patch
    dates are shown where available. Citations in the
    patch-date column point to the corresponding Chrome
    stable-channel release bulletins.}
    \label{tab:real-world}
    \centering
    \begin{tabular}{@{}lllll@{}}
    \toprule
    \textbf{Target} & \textbf{Vuln.\ Type} & \textbf{Severity} & \textbf{Identifier} & \textbf{Patch Date} \\
    \midrule
    Chrome / WebCodecs & Use-after-free               & Critical     & CVE-2026-5280 & 2026-03-31~\cite{chrome2026mar31} \\
    Chrome / Proxy     & Use-after-free               & Critical & CVE-2026-6297 & 2026-04-15~\cite{chrome2026cve6297} \\
    Chrome / Network   & Use-after-free               & High     & CVE-2026-4454 & 2026-03-18~\cite{chrome2026mar18} \\
    Chrome / Codecs    & Integer overflow             & High     & CVE-2026-5274 & 2026-03-31~\cite{chrome2026mar31} \\
    Chrome / Rendering & Heap Buffer Overflow                & High     & CVE-2026-4462 & 2026-03-18~\cite{chrome2026mar18} \\
    Chrome / Rendering     & Use-after-free               & High & 494352590 & N/A \\
    Chrome / Rendering     & Heap Buffer Overflow                & High & 493534964 & N/A \\
    Chrome / WebRTC     & Heap Buffer Overflow                & High & 488803429 & N/A \\
    Chrome / WebCodecs     & Heap Buffer Overflow                & High & 488585490 & N/A \\
    Chrome / WebGL     & Inappropriate implementation & Medium   & CVE-2026-5291 & 2026-03-31~\cite{chrome2026mar31} \\
    \bottomrule
    \end{tabular}
    \end{table*}

\subsection{RQ2: Ablation Study}
\label{sec:rq2}

A literal leave-one-dimension-out ablation over
$H = (\mathcal{A}, \mathcal{G}, \Sigma, \Phi, \Psi)$ is not
especially meaningful in this DSL because several dimensions
are coupled by construction.
The agent set $\mathcal{A}$ and topology $\mathcal{G}$ are
tightly coupled
($\mathcal{G} \subseteq \mathcal{A} \times \mathcal{A}$):
adding or deleting an agent necessarily changes the graph.
The coordination protocol $\Psi$ is likewise not independent,
because Section~\ref{sec:dsl} encodes $\Psi$ through graph
structure (sequential chains, fan-out, guarded retry edges),
so freezing topology already freezes most of $\Psi$.
At the other extreme, freezing the message schemas $\Sigma$
globally is too strong, because the per-edge templates are
what make new edges and new agents well-typed in the first
place.
We therefore report the ablation study in terms of three
edit families that match the implementation-level taxonomy:
\textbf{structural edits} (affecting
$\mathcal{A}$, $\mathcal{G}$, $\Psi$),
\textbf{prompt edits} (affecting $\Sigma$), and
\textbf{tool edits} (affecting $\Phi$).
Table~\ref{tab:ablation-study} summarizes the ablation
results.  All three variants are direct reruns of
\textsc{AgentFlow} under identical controlled conditions.
The only difference is which edit family the proposer is
allowed to emit.
\emph{No Structure Search} disables structural edits
(agent additions, deletions, and topology rewiring) so
the optimizer can only modify prompts and tools;
\emph{No Prompt Search} freezes all per-agent prompts;
\emph{No Tool Search} freezes the tool-binding map $\Phi$.
Each constraint is enforced at the validator: any proposed
edit whose diff touches a frozen component is rejected before
evaluation, so the proposer LLM cannot circumvent the
restriction.

\noindent
This does not imply that prompt search alone would reach
$84.3\%$: the \emph{No Prompt Search} variant still benefits
from the initial $H_0$ prompts, and disabling any single
family still loses $7.9$--$32.5$\,pp relative to the full
search.
The three edit families are complementary, not redundant.
Prompt edits are the largest single contributor \emph{given
the multi-agent topology that structural edits built}: a
prompt-only optimizer on a fixed single-agent scaffold (as
in OPRO~\cite{yang2024opro} or DSPy~\cite{khattab2024dspy})
would not have the multi-role, fan-out, and retry structure
over which those prompts are optimized.

\begin{rqbox}
  \noindent\textbf{Result for RQ2: under identical controlled
  conditions, disabling prompt edits ($\Sigma$) produces the largest drop
  ($-32.5$\,pp), while disabling structural
  ($\mathcal{A}$, $\mathcal{G}$, $\Psi$) or tool ($\Phi$)
  edits produces smaller drops ($-7.9$ and
  $-12.4$\,pp respectively).
  We interpret this as evidence that prompt edits carry most of the optimization
  signal, while structural and tool edits contribute additional
  gains on top of a working prompt.}
  \end{rqbox}

\subsection{RQ3: Real-World Impact}
\label{sec:rq3}
\label{sec:rq4}

To stress-test whether \textsc{AgentFlow} generalizes beyond
the TerminalBench-2 benchmark, we ran the same synthesis loop
on the Google~Chrome codebase, one of the world's largest and
most extensively audited open-source C/C++ codebases, spanning
over \textbf{35~million lines of code}.
At this scale, even the relevant subsystems (rendering,
networking, codecs) individually exceed frontier-model context
windows by orders of magnitude, making it infeasible for any
single agent to hold the target's structure in context while
simultaneously crafting and triaging exploit inputs.
All four runtime feedback
channels (Section~\ref{sec:bg-feedback}) are wired into the
target's build.
Every other component (the DSL, the diagnose-and-propose
loop (Algorithm~\ref{alg:harness-opt}), and the archive and
validation infrastructure (Section~\ref{sec:system})) is
unchanged.

\paragraph{LLM model}
Chrome's sheer scale (several million lines, complex
multi-process architecture) makes it impractical to run the
full campaign with Claude~Opus~4.6.
We therefore use \textbf{Kimi~K2.5}~\cite{team2025kimik25},
an open-weight sparse Mixture-of-Experts model ($1$\,T
total parameters, $32$\,B active per forward pass) that is
substantially cheaper per token than
Claude~Opus~4.6~\cite{anthropic2025opus}.
The weights are publicly released under a modified MIT
license.
Kimi~K2.5 is not among the top-ranked frontier
models~\cite{chiang2024chatbotarena,lmsys2026lmarena}, so the
campaign simultaneously tests whether \textsc{AgentFlow} can
drive a mid-tier LLM model to real vulnerabilities in a
production codebase, demonstrating that the framework
generalizes across both model tiers and target domains.

Table~\ref{tab:real-world} lists the resulting disclosures.
\textsc{AgentFlow} produced \textbf{ten} previously
unknown zero-day vulnerabilities in Google~Chrome.
We report this campaign as a case study of real-world
capability, not as a compute-matched comparison against
alternative systems.
All ten were accepted by Chrome VRP and confirmed by the
vendor. Six already carry public CVEs; the remaining four
are shown in Table~\ref{tab:real-world} under the accepted
vendor identifiers attached to those reports.
The two most severe results, CVE-2026-5280 and
CVE-2026-6297, are Critical use-after-free vulnerabilities
that enable sandbox escape from an attacker-controlled page
to code execution on the user's host.

\paragraph{Chrome campaign details}
\label{sec:rq4-cost}
The synthesis loop produced the harness in
Figure~\ref{fig:chrome-arch} (Appendix~\ref{app:open-science}):
seven subsystem-specific analysts, an attack-surface mapper and
strategy planner, parallel explorers distributed across the
seven Chrome subsystems, a four-stage crash-triage
pipeline, and a two-stage validation pipeline with six feedback
loops that drive iterative PoC generation.
The campaign ran uninterrupted for 7 days on a public-cloud
pool of 24 nodes, each provisioned with $8\times$H100 GPUs
(192 H100s total).

\begin{rqbox}
\noindent\textbf{Result for RQ3: \textsc{AgentFlow} discovered
ten previously unknown zero-day vulnerabilities in
Google~Chrome, including two Critical sandbox-escape
vulnerabilities (CVE-2026-5280 and CVE-2026-6297) with Kimi~K2.5.}
\end{rqbox}

\section{Related Work}
\label{sec:related-work}

\paragraph{Prior harness optimizers expose uneven search interfaces}
Existing optimizers differ sharply in how directly they expose the
five-component harness $(\mathcal{A},\mathcal{G},\Sigma,\Phi,\Psi)$
to search
(Table~\ref{tab:coverage}).
Meta-Harness~\cite{lee2026metaharness} rewrites a single agent's
prompts, tool bindings, and context-building code through a
free-form Claude Code session, so the lone agent \emph{specification}
is partially mutated each iteration ($\Sigma$, $\Phi$, and the
prompt-level slice of $\mathcal{A}$), but the team cardinality stays
at $|\mathcal{A}|{=}1$: no edit introduces a second agent role or any
inter-agent coordination, so any failure mode that needs more than
one role is structurally out of reach.
ADAS~\cite{hu2024adas} synthesises agent \emph{bodies} as Python
code, including their prompts and tool calls
($\mathcal{A},\Sigma,\Phi$), but the call graph between agents is
fixed by a hand-written controller; the topology $\mathcal{G}$ and
the coordination protocol $\Psi$ never change during search.
AFlow~\cite{zhang2024aflow} runs Monte Carlo tree search over a small library of
\emph{workflow operators} that fix the agent pool, the tool
allocation, and the message schemas; only the workflow graph
$\mathcal{G}$ (and partially the coordination operators $\Psi$) is
searched.
MaAS~\cite{yuan2025maas} samples agent teams from a fixed
pool and routes each query through a hand-coded cascade of
agents, so $\mathcal{A}$ is searched, $\Phi$ is
partially allocated per query, and the coordination patterns outside
the cascade (e.g.\ a verifier looping back to a generator) cannot be
represented at all.
OPRO~\cite{yang2024opro},
TextGrad~\cite{yuksekgonul2025textgrad}, and
DSPy/MIPRO~\cite{khattab2024dspy,opsahlOng2024mipro} restrict
themselves to prompt tuning around a fixed model and fixed
topology.
Across all of these, the search space leaves at least one of the
five components fixed by construction, and every iteration is graded
by a scalar outcome or by the model's own trace.
\textsc{AgentFlow} treats all five components as a single
typed grammar so each \textsc{AgentFlow} edit can move any
component (an agent, an edge, a schema, a tool binding, or
a coordination operator) the diagnoser identifies as the
locus of the failure;
the diagnoser reads the runtime feedback the target
produces on each trial
(Section~\ref{sec:bg-feedback}), so the proposer's rewrites
are guided by what the target actually executed in addition
to the agents' own self-report.

\paragraph{Multi-agent frameworks ship the topology as a constant}
AutoGen~\cite{wu2023autogen}, MetaGPT~\cite{hong2024metagpt},
CAMEL~\cite{li2023camel}, and ChatDev~\cite{qian2024chatdev}
provide expressive runtimes for cooperative planning and code
generation, but the topology is still hand-specified in the
application code rather than searched by an optimizer.
A user can rewrite the agent graph manually, but the framework does
not itself search over topologies or propose graph edits from task
feedback.
We treat the topology as a first-class variable that the optimizer
mutates each iteration.

\paragraph{Self-improving agents close the loop on themselves}
Reflexion~\cite{shinn2023reflexion} stores verbal self-reflections
on failure traces; Self-Refine~\cite{madaan2023selfrefine}
iterates generate-then-critique with the same model on both sides;
Tree of Thoughts~\cite{yao2024tot} expands chain-of-thought into a
search over the model's own intermediate states.
In every case the feedback is the agent's self-report;
\textsc{AgentFlow} additionally consumes structural channels
from the target (test pass/fail, program stdout/stderr, and,
when available, line/branch coverage and sanitizer reports),
which are emitted by the target rather than by the agent and
so are independent of the agent's own description of what it
did.
Voyager~\cite{wang2024voyager} is the closest in spirit, accumulating
a skill library from real Minecraft execution feedback, but the
feedback drives skill \emph{accumulation} rather than rewriting the
agent architecture.
None of these systems edits the harness itself; the topology, tools,
and message schemas of the underlying agent are constants of the
loop.

\paragraph{Coverage-guided fuzzing and LLM agents for security}
AFL++~\cite{fioraldi2020afl} and its descendants use coverage
and distance metrics to steer \emph{input} mutation for a
single fuzzer; \textsc{AgentFlow} reuses the same
instrumentation infrastructure as one of several runtime
feedback channels its diagnoser reads
(Section~\ref{sec:bg-feedback}).
ChatAFL~\cite{meng2024chatafl}, Fuzz4All~\cite{xia2024fuzz4all},
TitanFuzz~\cite{deng2023titanfuzz}, and the Fang et
al.~\cite{fang2024llmagent,fang2024teams} agent teams put an
LLM in front of a fuzzer or exploit pipeline but fix the
harness around it; AutoCodeRover~\cite{zhang2024autocoderover}
and Agentless~\cite{xia2024agentless} fix the harness
topology (localize-then-patch, generate-then-test).
\textsc{AgentFlow} inverts this: the model is fixed and the
harness is the search variable, rewritten at every
iteration from the runtime feedback the target produces.

\providecommand{\cmark}{\ensuremath{\checkmark}}%
\begin{table}[t]
\centering
\caption{Coverage of the five harness components
$(\mathcal{A},\mathcal{G},\Sigma,\Phi,\Psi)$ by prior
optimizers (Section~\ref{sec:formalization}).
\cmark{} = first-class search variable;
(\cmark) = can vary, but only indirectly through a fixed
protocol or parameterization;
\textsc{---} = fixed by construction.
We mark a component only when the published method directly
manipulates it, or when it clearly varies in the published
instantiation; theoretical code-space expressivity alone does not
count.}
\label{tab:coverage}
\setlength{\tabcolsep}{4pt}
\renewcommand{\arraystretch}{1.05}
\begin{tabular}{@{}l c c c c c@{}}
\toprule
\textbf{System}
  & $\mathcal{A}$
  & $\mathcal{G}$
  & $\Sigma$
  & $\Phi$
  & $\Psi$ \\
& \scriptsize agents
  & \scriptsize topology
  & \scriptsize schemas
  & \scriptsize tools
  & \scriptsize coord. \\
\midrule
Meta-Harness~\cite{lee2026metaharness}
  & ---  & ---  & \cmark & \cmark & --- \\
ADAS~\cite{hu2024adas}
  & \cmark & --- & \cmark & \cmark & --- \\
AFlow~\cite{zhang2024aflow}
  & --- & \cmark & --- & --- & (\cmark) \\
MaAS~\cite{yuan2025maas} & \cmark & --- & --- & (\cmark) & --- \\
OPRO~\cite{yang2024opro}   & --- & --- & \cmark & --- & --- \\
TextGrad~\cite{yuksekgonul2025textgrad}   & --- & --- & \cmark & --- & --- \\
DSPy~\cite{khattab2024dspy}  & --- & --- & \cmark & --- & --- \\
\midrule
\textbf{\textsc{AgentFlow}}
  & \cmark & \cmark & \cmark & \cmark & \cmark \\
\bottomrule
\end{tabular}
\end{table}

\section{Conclusion}
\label{sec:conclusion}

This paper presented \textsc{AgentFlow}, a system for the
automated synthesis of multi-agent harnesses.
\textsc{AgentFlow} contributes a unified typed graph DSL
over the five main harness dimensions together with a
feedback-driven outer loop and cheap structural validation
for candidate edits.
On TerminalBench-2, the synthesized harness reaches
$84.3\%$, the highest score among all
Claude~Opus~4.6 entries on the public leaderboard.
The same synthesis loop, re-run on Chrome with Kimi~K2.5,
yields ten previously unknown zero-day vulnerabilities,
including two Critical sandbox-escape CVEs
(CVE-2026-5280 and CVE-2026-6297).

\begin{figure*}[!b]
  \centering
  \resizebox{0.58\textwidth}{!}{%
  \begin{tikzpicture}[
    P/.style={font=\large\bfseries\sffamily, inner sep=2pt},
    Phase/.style={font=\small\bfseries\sffamily, text=black!55,
                  inner sep=1pt},
    L/.style={font=\footnotesize\sffamily},
    Tag/.style={font=\scriptsize\sffamily, text=black!55, inner sep=1pt},
    A/.style={-{Stealth[length=2.2mm, width=2mm]}, line width=1.0pt,
              draw=black!65},
    Aside/.style={-{Stealth[length=2mm, width=1.7mm]}, line width=0.9pt,
              draw=cTeal, dashed},
    sub/.style={draw=cBlue!60, fill=cBlue!7, rounded corners=3pt,
                inner sep=4pt, font=\small\sffamily,
                minimum width=2.2cm, minimum height=0.7cm, align=center},
    worker/.style={draw=cPurple!60, fill=cPurple!8, rounded corners=4pt,
                inner sep=5pt, font=\small\sffamily,
                minimum width=3.2cm, minimum height=2.2cm, align=center},
    gate/.style={draw=cOrange!60, fill=cOrange!8, rounded corners=3pt,
                inner sep=4pt, font=\small\sffamily,
                minimum width=2.0cm, minimum height=0.65cm, align=center},
    eval/.style={draw=cGreen!60, fill=cGreen!8, rounded corners=3pt,
                inner sep=4pt, font=\small\sffamily,
                minimum width=2.0cm, minimum height=0.7cm, align=center},
    io/.style={draw=black!50, fill=black!3, rounded corners=2pt,
                inner sep=3pt, font=\small\sffamily,
                minimum width=1.6cm, minimum height=0.6cm, align=center},
  ]

  \node[io, fill=black!4, minimum width=1.8cm, minimum height=1.0cm,
        align=center] (task) at (-0.6, 0)
    {Task\\instruction};

  \node[sub] (planner)  at (3.6, 1.55) {Planner};
  \node[sub] (env)      at (3.6, 0.30) {Env\\Analyzer};
  \node[sub] (advisor)  at (3.6, -0.95) {Domain\\Advisor};

  \draw[A] (task.east) -- ([xshift=-2pt]planner.west);
  \draw[A] (task.east) -- ([xshift=-2pt]env.west);
  \draw[A] (task.east) -- ([xshift=-2pt]advisor.west);

  \node[sub, fill=cBlue!12, draw=cBlue!70, minimum width=2.5cm]
      (approach) at (7.0, 0.30) {Approach\\Generator\\\textit{(Plan A,B,C)}};
  \draw[A] (planner.east)  -- (approach.west);
  \draw[A] (env.east)      -- (approach.west);
  \draw[A] (advisor.east)  -- (approach.west);

  \draw[line width=1.0pt, draw=black!65]
      (approach.east) -- ++(0.45,0) coordinate (afork);
  \fill[black!65] (afork) circle (1.4pt);

  \node[worker] (w1) at (11.0, 2.55) {};
  \node[worker] (w2) at (11.0, 0.00) {};
  \node[worker] (w3) at (11.0, -2.55) {};

  \node[anchor=north west, font=\scriptsize\sffamily\bfseries,
        text=cPurple!75!black]
      at ([xshift=3pt,yshift=-3pt]w1.north west) {Worker 1};
  \node[anchor=north west, font=\scriptsize\sffamily\bfseries,
        text=cPurple!75!black]
      at ([xshift=3pt,yshift=-3pt]w2.north west) {Worker 2};
  \node[anchor=north west, font=\scriptsize\sffamily\bfseries,
        text=cPurple!75!black]
      at ([xshift=3pt,yshift=-3pt]w3.north west) {Worker 3};

  \node[anchor=south west, font=\tiny\ttfamily, text=black!60]
      at ([xshift=3pt,yshift=2pt]w1.south west) {/tmp/ws\_0};
  \node[anchor=south west, font=\tiny\ttfamily, text=black!60]
      at ([xshift=3pt,yshift=2pt]w2.south west) {/tmp/ws\_1};
  \node[anchor=south west, font=\tiny\ttfamily, text=black!60]
      at ([xshift=3pt,yshift=2pt]w3.south west) {/tmp/ws\_2};

  \foreach \w/\plan in {w1/Plan A, w2/Plan B, w3/Plan C} {
    \node[font=\footnotesize\sffamily\itshape, text=cPurple!70!black]
        at ([yshift=4pt]\w.center) {\plan};
    \node[draw=black!40, fill=white, rounded corners=1pt,
          font=\tiny\sffamily, inner sep=2pt,
          minimum width=2.7cm, anchor=center]
        at ([yshift=-7pt]\w.center) {Tmux + execute\_commands};
    \node[draw=black!40, fill=white, rounded corners=1pt,
          font=\tiny\sffamily, inner sep=2pt,
          minimum width=2.7cm, anchor=center]
        at ([yshift=-22pt]\w.center) {image\_read tool};
  }

  \draw[A] (afork) |- (w1.west);
  \draw[A] (afork) -- (w2.west);
  \draw[A] (afork) |- (w3.west);

  \node[gate] (clean1) at (15.0, 3.20) {Cleanup};
  \node[gate] (verif1) at (15.0, 2.10) {Verifier};
  \node[gate] (clean2) at (15.0,  0.55) {Cleanup};
  \node[gate] (verif2) at (15.0, -0.55) {Verifier};
  \node[gate] (clean3) at (15.0, -2.10) {Cleanup};
  \node[gate] (verif3) at (15.0, -3.20) {Verifier};

  \draw[A] (w1.east) -- (clean1.west);
  \draw[A] (clean1.south) -- (verif1.north);
  \draw[A] (w2.east) -- (clean2.west);
  \draw[A] (clean2.south) -- (verif2.north);
  \draw[A] (w3.east) -- (clean3.west);
  \draw[A] (clean3.south) -- (verif3.north);

  \draw[Aside]
      (verif1.south) .. controls +(-90:1.55) and +(0:0.55) ..
      ([yshift=-6pt]w1.east)
      node[font=\scriptsize\itshape, pos=0.5, text=black!60,
           fill=white, inner xsep=2pt, inner ysep=0.5pt]
        {verifier.on\_fail};
  \draw[Aside]
      (verif2.south) .. controls +(-90:1.55) and +(0:0.55) ..
      ([yshift=-6pt]w2.east)
      node[font=\scriptsize\itshape, pos=0.5, text=black!60,
           fill=white, inner xsep=2pt, inner ysep=0.5pt]
        {verifier.on\_fail};
  \draw[Aside]
      (verif3.south) .. controls +(-90:1.55) and +(0:0.55) ..
      ([yshift=-6pt]w3.east)
      node[font=\scriptsize\itshape, pos=0.5, text=black!60,
           fill=white, inner xsep=2pt, inner ysep=0.5pt]
        {verifier.on\_fail};

  \node[eval, minimum width=2.4cm, minimum height=2.4cm, align=center]
      (merge) at (18.5, 0.0) {\textbf{Evaluator}\\\small sub-agent\\\small picks winner};

  \draw[A] (verif1.east) .. controls +(0:0.6) and +(180:0.5) .. ([yshift=18pt]merge.west);
  \draw[A] (verif2.east) -- (merge.west);
  \draw[A] (verif3.east) .. controls +(0:0.6) and +(180:0.5) .. ([yshift=-18pt]merge.west);

  \node[io, fill=black!4, minimum width=1.8cm, minimum height=1.0cm,
        align=center]
      (out) at (21.6, 0.0) {Submit\\answer};
  \draw[A, line width=1.3pt] (merge.east) -- (out.west);

  \node[Phase] at (3.6, 2.55) {Phase 1: Context};
  \node[Phase] at (7.0, 2.55) {Phase 1.5: Plans};
  \node[Phase] at (11.0, 4.05) {Phase 2: Parallel execution};
  \node[Phase] at (15.0, 4.05) {Phase 3: Self-check};
  \node[Phase] at (18.5, 4.05) {Phase 4: Merge};

  \coordinate (retryDip) at ([yshift=-22pt]verif3.south);

  \begin{scope}[on background layer]
    \node[draw=black!12, dashed, rounded corners=8pt, line width=0.5pt,
          fill=black!1, inner xsep=10pt, inner ysep=12pt,
          fit=(planner)(advisor)(approach)(w1)(w3)
              (verif1)(verif3)(retryDip)(merge)]
          (pipeline) {};
  \end{scope}

  \node[font=\small\sffamily\bfseries, text=cTeal!70!black,
        anchor=west]
      (sigtitle) at ([yshift=-12pt]pipeline.south west)
      {Structural feedback channels:};
  \node[font=\footnotesize\sffamily, text=black!75, anchor=west]
      at ([xshift=8pt]sigtitle.east)
      {test stdout/stderr $\to$ verifier;\quad
       cleanup-script stderr $\to$ pre-submission cleanup;\quad
       per-worker tmux output $\to$ evaluator};

  \node[anchor=west, font=\scriptsize\sffamily]
      at ([yshift=-32pt]pipeline.south west)
      {\textcolor{black!65}{$\rightarrow$}~data flow\quad
       \textcolor{cTeal}{- - $\rightarrow$}~structural feedback\quad
       \textit{(every coloured box is an LLM agent;
       colours group agents by phase)}};

  \end{tikzpicture}%
  }%
  \caption{Final synthesized \textsc{AgentFlow} harness for
  TerminalBench-2: nine specialised agent roles across five phases,
  with three parallel workspaces merged by an evaluator.
  Dashed teal arrows are structural-feedback channels.}
  \Description{Wide horizontal pipeline figure showing the final
  multi-agent harness. From left to right: a task-instruction box
  fans out to three parallel sub-agents (Planner, Env Analyzer,
  Domain Advisor) in Phase~1. Their outputs feed an Approach
  Generator (Phase~1.5) that produces three plans, which fork into
  three parallel Worker boxes (Phase~2), each labelled with its
  isolated workspace path (\texttt{/tmp/ws\_0}, \texttt{/tmp/ws\_1},
  \texttt{/tmp/ws\_2}) and tools (Tmux execute\_commands,
  image\_read). Each worker feeds a Cleanup gate then a Verifier
  gate (Phase~3); failed verifications loop back into the worker
  via dashed teal arrows (structural feedback). All three verified
  outputs feed an Evaluator sub-agent (Phase~4) that picks the
  winning workspace and forwards it to the Submit-answer sink
  (gray box, same style as the Task-instruction source).}
  \label{fig:final-arch}
\end{figure*}
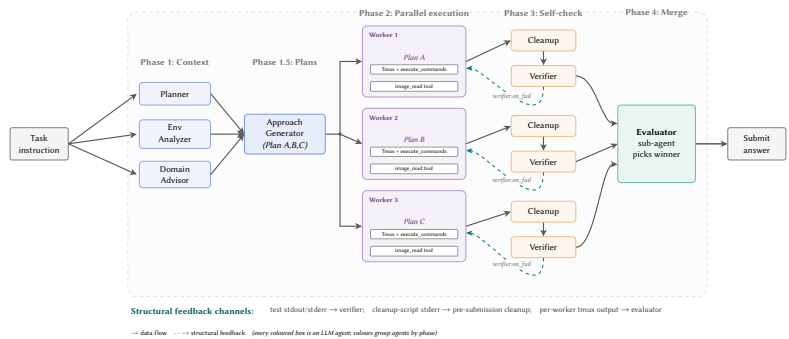
\begin{figure*}[t]
  \centering
  \resizebox{0.58\textwidth}{!}{%
  \begin{tikzpicture}[
    A/.style={-{Stealth[length=2.2mm, width=2mm]}, line width=1.0pt,
              draw=black!65},
    Afan/.style={-{Stealth[length=1.8mm, width=1.5mm]}, line width=0.7pt,
              draw=black!50},
    Aside/.style={-{Stealth[length=2.2mm, width=1.8mm]}, line width=0.9pt,
              draw=cTeal!80!black, dashed},
    analyst/.style={draw=cBlue!60, fill=cBlue!7, rounded corners=3pt,
                inner sep=3pt, font=\small\sffamily,
                minimum width=2.2cm, minimum height=0.55cm, align=center},
    plan/.style={draw=cPurple!60, fill=cPurple!8, rounded corners=3pt,
                inner sep=4pt, font=\small\sffamily,
                minimum width=2.5cm, minimum height=0.65cm, align=center},
    cluster/.style={draw=cOrange!55, fill=cOrange!5, rounded corners=4pt,
                inner xsep=6pt, inner ysep=4pt, font=\small\sffamily,
                minimum width=3.8cm, minimum height=0.90cm, align=center},
    triage/.style={draw=cGreen!60, fill=cGreen!8, rounded corners=3pt,
                inner sep=4pt, font=\small\sffamily,
                minimum width=2.2cm, minimum height=0.60cm, align=center},
    valid/.style={draw=cRed!45, fill=cRed!5, rounded corners=3pt,
                inner sep=4pt, font=\small\sffamily,
                minimum width=2.2cm, minimum height=0.60cm, align=center},
    io/.style={draw=black!50, fill=black!4, rounded corners=2pt,
                inner sep=4pt, font=\small\sffamily,
                minimum width=1.8cm, minimum height=1.0cm, align=center},
    fb/.style={font=\scriptsize\sffamily\bfseries,
               text=cTeal!70!black,
               draw=cTeal!55, fill=cTeal!8, rounded corners=1.5pt,
               inner xsep=3pt, inner ysep=1.5pt},
    Phase/.style={font=\small\bfseries\sffamily, text=black!55,
                  inner sep=1pt},
    clabel/.style={font=\fontsize{7}{8.5}\selectfont\sffamily\bfseries,
                   text=cOrange!80!black, anchor=north west},
    ccount/.style={font=\fontsize{7}{8.5}\selectfont\sffamily\bfseries,
                   text=black!55},
    dot/.style={fill=cOrange!50, circle, inner sep=0pt,
                minimum size=1.8pt},
    rlabel/.style={font=\scriptsize\itshape\sffamily, text=black!55,
                   fill=white, inner xsep=2pt, inner ysep=1pt},
  ]

  \node[io] (src) at (0, 0) {Chrome\\target};

  \def\AX{3.4}
  \node[analyst] (a1) at (\AX,  2.70) {WebCodecs};
  \node[analyst] (a2) at (\AX,  1.80) {Rendering};
  \node[analyst] (a3) at (\AX,  0.90) {Network};
  \node[analyst] (a4) at (\AX,  0.00) {Codecs};
  \node[analyst] (a5) at (\AX, -0.90) {Proxy};
  \node[analyst] (a6) at (\AX, -1.80) {WebRTC};
  \node[analyst] (a7) at (\AX, -2.70) {WebGL};

  \draw[line width=0.7pt, draw=black!50]
      (src.east) -- (1.6, 0);
  \draw[line width=0.7pt, draw=black!50]
      (1.6, 2.70) -- (1.6, -2.70);
  \foreach \a in {a1,a2,a3,a4,a5,a6,a7} {
    \draw[Afan] (1.6, 0 |- \a.west) -- (\a.west);
  }
  \node[fb] at ([yshift=7pt]a1.north) {stderr};

  \def\PX{7.0}
  \node[plan, minimum height=1.0cm] (mapper) at (\PX, 1.2)
      {Attack Surface\\Mapper};
  \node[plan, minimum height=1.0cm] (strat) at (\PX, -0.3)
      {Strategy\\Planner};
  \node[plan, minimum height=1.0cm] (seed) at (\PX, -1.8)
      {Seed Corpus\\Generator};

  \draw[line width=0.7pt, draw=black!50]
      (5.2, 2.70) -- (5.2, -2.70);
  \foreach \a in {a1,a2,a3,a4,a5,a6,a7} {
    \draw[line width=0.7pt, draw=black!50]
        (\a.east) -- (5.2, 0 |- \a.east);
  }
  \draw[A] (5.2, 1.2) -- (mapper.west);
  \draw[A] (mapper.south) -- (strat.north);
  \draw[A] (strat.south) -- (seed.north);
  \node[fb] at ([yshift=7pt]mapper.north) {cov};

  \def\EX{12.0}

  \newcommand{\drawcluster}[6]{%
    \node[cluster, minimum width=3.8cm, minimum height=0.90cm]
        (#1) at (\EX, #2) {};
    \node[clabel] at ([xshift=4pt, yshift=-2pt]#1.north west) {#3};
    \node[ccount, anchor=east]
        at ([xshift=-5pt]#1.east) {$\times$#4};
    \pgfmathsetmacro{\gridw}{(#5-1)*0.09}
    \pgfmathsetmacro{\gridh}{(#6-1)*0.09}
    \pgfmathsetmacro{\ox}{-\gridw/2 + 0.05}
    \pgfmathsetmacro{\oy}{\gridh/2 - 0.04}
    \foreach \r in {1,...,#6} {
      \foreach \c in {1,...,#5} {
        \pgfmathsetmacro{\dx}{\ox + (\c-1)*0.09}
        \pgfmathsetmacro{\dy}{\oy - (\r-1)*0.09}
        \node[dot] at ([xshift=\dx cm, yshift=\dy cm]#1.center) {};
      }
    }
  }

  \drawcluster{c1}{ 3.15}{Rendering}{48}{12}{4}
  \drawcluster{c2}{ 2.10}{WebCodecs}{32}{8}{4}
  \drawcluster{c3}{ 1.05}{Codecs}{32}{8}{4}
  \drawcluster{c4}{ 0.00}{Network}{24}{8}{3}
  \drawcluster{c5}{-1.05}{WebRTC}{24}{8}{3}
  \drawcluster{c6}{-2.10}{Proxy}{16}{8}{2}
  \drawcluster{c7}{-3.15}{WebGL}{16}{8}{2}

  \begin{scope}[on background layer]
    \node[draw=cOrange!40, rounded corners=6pt, fill=cOrange!2,
          inner xsep=10pt, inner ysep=10pt,
          fit=(c1)(c7)] (pool) {};
  \end{scope}
  \node[font=\small\sffamily\bfseries, text=cOrange!80!black,
        anchor=south] at ([yshift=5pt]c1.north) {192 explorers};

  \draw[line width=1.0pt, draw=black!65]
      (seed.east) -- (9.4, -1.8);
  \draw[line width=0.7pt, draw=black!50]
      (9.4, 3.15) -- (9.4, -3.15);
  \foreach \cl in {c1,c2,c3,c4,c5,c6,c7} {
    \draw[Afan] (9.4, 0 |- \cl.west) -- (\cl.west);
  }
  \node[fb] at ([yshift=7pt]c2.north) {branch};
  \node[fb] at ([yshift=7pt]c4.north) {cov};
  \node[fb] at ([yshift=-7pt]c7.south) {stderr};

  \draw[Aside]
      ([yshift=16pt]pool.east)
      .. controls +(0:0.9) and +(0:0.9) ..
      ([yshift=0pt]pool.east)
      node[rlabel, pos=0.5, right=3pt] {on\_fail};

  \def\TX{17.4}
  \node[triage, minimum height=0.85cm] (filter) at (\TX,  2.0)
      {Crash\\Filter};
  \node[triage, minimum height=0.85cm] (dedup)   at (\TX,  0.6)
      {Root Cause\\Analyzer};
  \node[triage, minimum height=0.85cm] (minim)   at (\TX, -0.8)
      {PoC\\Minimizer};
  \node[triage, minimum height=0.85cm] (classif) at (\TX, -2.2)
      {Report\\Writer};

  \draw[A] ([yshift=8pt]pool.east |- filter.west) -- (filter.west);
  \draw[A] (filter.south) -- (dedup.north);
  \draw[A] (dedup.south) -- (minim.north);
  \draw[A] (minim.south) -- (classif.north);

  \node[fb, anchor=west] at ([xshift=4pt, yshift=5pt]filter.east) {ASan};
  \node[fb, anchor=west] at ([xshift=4pt, yshift=-5pt]filter.east) {UBSan};
  \node[fb, anchor=west] at ([xshift=4pt]dedup.east) {trace};

  \draw[Aside]
      ([xshift=-4pt, yshift=5pt]minim.west)
      .. controls +(-180:0.7) and +(-180:0.7) ..
      ([xshift=-4pt, yshift=-5pt]minim.west)
      node[rlabel, pos=0.5, left=3pt] {on\_fail};
  \node[fb, anchor=west] at ([xshift=4pt]minim.east) {MiraclePtr};

  \def\VX{21.4}
  \node[valid, minimum height=1.0cm, minimum width=2.5cm]
      (exploit) at (\VX, 0.6) {Exploit\\Validator};
  \node[valid, minimum height=1.0cm, minimum width=2.5cm]
      (repro) at (\VX, -1.2) {Repro\\Checker};

  \draw[A] (classif.east) -- ++(0.6,0) |- (exploit.west);
  \draw[A] (exploit.south) -- (repro.north);

  \node[fb, anchor=west] at ([xshift=4pt]exploit.east) {ASan};
  \node[fb, anchor=west] at ([xshift=4pt]repro.east) {stderr};

  \node[io] (out) at (24.8, 0) {Output};
  \draw[A, line width=1.3pt] (repro.east) -- ++(0.5,0) |- (out.west);

  \draw[Aside]
      ([xshift=-20pt]pool.south)
      -- ([xshift=-20pt]pool.south |- 0,-4.5)
      -- ([xshift=6pt]strat.south |- 0,-4.5)
         node[rlabel, pos=0.5] {on\_fail}
      -- ([xshift=6pt]strat.south);

  \draw[Aside]
      (classif.south)
      -- (classif.south |- 0,-5.3)
      -- (pool.south |- 0,-5.3)
         node[rlabel, pos=0.5] {on\_fail}
      -- (pool.south);

  \draw[Aside]
      ([xshift=-8pt]repro.south)
      -- ([xshift=-8pt]repro.south |- 0,-4.0)
      -- ([xshift=8pt]minim.south |- 0,-4.0)
         node[rlabel, pos=0.5] {on\_fail}
      -- ([xshift=8pt]minim.south);

  \draw[Aside]
      ([xshift=8pt]repro.south)
      -- ([xshift=8pt]repro.south |- 0,-6.0)
      -- ([xshift=20pt]pool.south |- 0,-6.0)
         node[rlabel, pos=0.5] {on\_fail}
      -- ([xshift=20pt]pool.south);

  \node[Phase] at (\AX,  4.2) {Phase 1: Analysis};
  \node[Phase] at (\PX,  4.2) {Phase 2: Planning};
  \node[Phase] at (\EX,  4.8) {Phase 3: Exploration};
  \node[Phase] at (\TX,  4.2) {Phase 4: Triage};
  \node[Phase] at (\VX,  4.2) {Phase 5: Validation};

  \coordinate (loopDip) at (\EX, -6.3);
  \begin{scope}[on background layer]
    \node[draw=black!12, dashed, rounded corners=8pt, line width=0.5pt,
          fill=black!1, inner xsep=10pt, inner ysep=12pt,
          fit=(a1)(a7)(mapper)(seed)(pool)(c1)(c7)
              (filter)(classif)(loopDip)(exploit)(repro)]
        (pipeline) {};
  \end{scope}

  \node[anchor=west, font=\scriptsize\sffamily]
      at ([yshift=-12pt]pipeline.south west)
      {\textcolor{black!65}{$\rightarrow$}~data flow\quad
       \textcolor{cTeal!80!black}{- - $\rightarrow$}~retry / feedback\quad
       \tikz[baseline=-0.5ex]{\node[fb, inner xsep=2pt, inner ysep=0.5pt,
         font=\fontsize{5.5}{6.5}\selectfont\sffamily\bfseries]
         {\strut ch};}\,feedback channel\quad
       \textit{(every coloured box is an LLM agent;
       each dot is one explorer)}};

  \end{tikzpicture}%
  }%
  \caption{Synthesized \textsc{AgentFlow} harness for the
  Chrome campaign (RQ3): 18~agent roles with
  192~parallel explorers across seven subsystems.
  Six feedback loops drive iterative PoC generation.
  This harness produced the ten zero-days in
  Table~\ref{tab:real-world}.}
  \Description{Wide pipeline diagram of the Chrome harness.}
  \label{fig:chrome-arch}
\end{figure*}

\bibliographystyle{ACM-Reference-Format}
\bibliography{references}

\appendix
\section{Open Science}
\label{app:open-science}

The harness optimizer of Section~\ref{sec:system} (the
\textsc{AgentFlow} runtime, the diagnoser and proposer prompts,
the archive manager, and the proposal-validation pipeline) is
released as open source at
\url{https://github.com/berabuddies/agentflow}~\cite{agentflowrepo},
together with the example pipelines, templates, and CLI used to
drive every optimization round in Section~\ref{sec:evaluation}.
Figure~\ref{fig:final-arch} shows the final synthesized harness
for TerminalBench-2 and
Figure~\ref{fig:chrome-arch} shows the final synthesized harness
for the Chrome campaign.
Full task datasets and per-target proof-of-concept inputs are
not redistributed: the TerminalBench-2 task suite is already
public under its upstream license, and the security-domain
inputs are withheld pending disclosure obligations
(Appendix~\ref{app:ethical}).

\section{Ethical Considerations}
\label{app:ethical}

All previously unknown vulnerabilities reported in this paper
were disclosed to the affected vendor before paper
submission. The ten Chrome vulnerabilities in
Table~\ref{tab:real-world} were filed with Google through the
Chrome Vulnerability Reward Program in Q1~2026 and all ten
were accepted by the vendor. Entries with
public patch metadata currently span releases dated
2026-03-18, 2026-03-31, and 2026-04-15; the
remaining entries are shown in the table under the accepted
vendor identifiers attached to those reports. Each
vulnerability is described in the body at the
level of vulnerability class, affected component, and public
identifier where available.

The released artifact~\cite{agentflowrepo} is the harness
optimizer source. It does not include per-target
proof-of-concept inputs, exploit primitives, crashing
inputs, or trigger conditions for any of the ten Chrome
vulnerabilities, and the per-iteration prompts and
\textsc{AgentFlow} programs released with the artifact are
the configurations from the leaderboard run on
TerminalBench-2 (Section~\ref{sec:rq1}--\ref{sec:rq2}); the
Chrome-specific configuration is held back. We have
exercised the same restraint in the body of the paper.

\end{document}